\definecolor{dullmagenta}{rgb}{0.4,0,0.4}   
\definecolor{darkblue}{rgb}{0,0,0.4}
\newtheorem{theorem}{Theorem}
\newtheorem{lemma}{Lemma}
\newtheorem{remark}{Remark}
\newtheorem{definition}{Definition}
\newtheorem{proposition}{Proposition}
\providecommand{\customgenericname}{}
\newcommand{\newcustomtheorem}[2]{%
\newenvironment{#1}[1]
{%
 \renewcommand\customgenericname{#2}%
 \renewcommand\theinnercustomgeneric{##1}%
 \innercustomgeneric
}
{\endinnercustomgeneric}
}
\def\tr{{ \rm tr}}
\def\cE{\mathcal E}
\def\cN{\mathcal N}
\def\id{\mathbbm{1}}
\newcommand{\proj}[1]{|#1\rangle\langle#1|}
\newcommand*{\cL}{\mathcal{L}}
\newcommand*{\cO}{\mathcal{O}}
\newcommand*{\cP}{\mathcal{P}}
\begin{document}

\title{The conditional entropy power inequality\\ for quantum additive noise channels}
\author{Giacomo De Palma and Stefan Huber}
\date{\today}
\maketitle
\begin{abstract}
We prove the quantum conditional entropy power inequality for quantum additive noise channels. This inequality lower bounds the quantum conditional entropy of the output of an additive noise channel in terms of the quantum conditional entropies of the input state and the noise when they are conditionally independent given the memory. We also show that this conditional entropy power inequality is optimal in the sense that we can achieve equality asymptotically by choosing a suitable sequence of Gaussian input states. We apply the conditional entropy power inequality to find an array of information-theoretic inequalities for conditional entropies which are the analogs of inequalities which have already been established in the unconditioned setting. Furthermore, we give a simple proof of the convergence rate of the quantum Ornstein-Uhlenbeck semigroup based on entropy power inequalities.
\end{abstract}

\section{Introduction}
Additive noise channels are central objects of interest in information theory.
A general class of such channels can be modeled by the well-known convolution operation:
If $X$ and $Y$ are two independent random variables with values in $\mathbb{R}^k$, the convolution operation $(X,Y) \mapsto X+Y$
combines $X$ and $Y$ into a new random variable $X+Y$, the probability density function of which is given by
\begin{equation}
  f_{X+Y}(z) := \int_{\mathbb{R}^{k}} f_X(z-x) f_Y(x)\; \text{d}^k x\ .
  \label{eq:conv}
\end{equation}
The convolution is a well-studied operation and it plays a role in many inequalities from functional analysis, such as Young's Inequality and its sharp version~\cite{Beckner75,BrascampLieb76} as well as the entropy power inequality~\cite{Shannon48-2,Stam59,Blachman65,Demboetal91}. These inequalities have important applications in classical information theory, as they can be used to bound communication capacities, which was originally carried out by Shannon~\cite{Shannon48-2}. An extensive overview of the many related inequalities in this area is given in~\cite{Demboetal91}.

Central to the work presented here is the entropy power inequality. It deals with the entropy of a linear combination of two independent random variables $X$ and $Y$ with values in $\mathbb{R}^k$,
\begin{equation}
 Z := \sqrt{\lambda} X + \sqrt{|1-\lambda|} Y, \qquad \lambda \geq 0\ .
 \label{eq:clcombination}
\end{equation}
The statement of the entropy power inequality~\cite{Shannon48-2,Stam59,Blachman65,Demboetal91} is
\begin{equation}
  \exp{\frac{2 S(Z)}{k}} \geq \lambda \exp{\frac{2S(X)}{k}} + | 1-\lambda | \exp{\frac{2S(Y)}{k}}\ ,
  \label{eq:clepi}
\end{equation}
where $S(X)
$ is the Shannon differential entropy of the random variable $X$.
A conditional version of \eqref{eq:clepi} can easily be derived: If $X$ and $Y$ are conditionally independent given the random variable $M$ (sometimes interpreted as a memory), then
\begin{equation}
  \exp{\frac{2 S(Z|M)}{k}} \geq \lambda \exp{\frac{2S(X|M)}{k}} + | 1-\lambda | \exp{\frac{2S(Y|M)}{k}}\ .
  \label{eq:clcondepi}
\end{equation}

In quantum information theory, an analogous operation to the convolution \eqref{eq:conv} is given by the action of a beam splitter $U_\lambda$ with transmissivity $0\le\lambda\le1$ on a quantum state (i.e., a linear positive operator with unit trace) $\rho_{AB}$ which is bipartite on two $n$-mode Gaussian quantum systems $A,B$. This action has the form
\begin{equation}
  \rho_{AB} \mapsto \rho_{C} = \tr_2 \left(U_\lambda \rho_{AB} U_\lambda^\dagger \right)\ ,
\label{eq:bs}
\end{equation}
where $C$ is again an $n$-mode quantum system and $\tr_2$ denotes the partial trace over the second system.
The mathematical motivation of the study of this operation is that in the special case of a product state, that is $\rho_{AB}~=~\rho_A \otimes \rho_B$, it is formally similar to the convolution described in \eqref{eq:conv} on the level of Wigner functions. For the beam splitter \eqref{eq:bs}, several important inequalities in the same spirit as in classical information theory
have been established~\cite{KoeSmiEPI,depalmaepi14,MariPalma15,koenigconditionalepi2015,depalmaconditional}. For instance, the quantum entropy power inequality reads
\begin{equation}\label{eq:qEPI}
  \exp{\frac{S(C)}{n}} \geq \lambda \exp{\frac{S(A)}{n}} + (1-\lambda) \exp{\frac{S(B)}{n}}\ ,
\end{equation}
with $S(A) = S(\rho_A) = -\tr[\rho_A \log\rho_A]$ being the von Neumann entropy of a quantum state.
Unlike in the classical setting, a conditional entropy power inequality for the operation \eqref{eq:bs} does not trivially follow from the unconditioned inequality \eqref{eq:qEPI}. However, it was recently established in \cite{depalmaconditional} that such an inequality holds nonetheless: For a joint quantum state $\rho_{ABM}$ such that $A$ and $B$ are conditionally independent given the memory system $M$, we have

\begin{equation}
  \exp{\frac{S(C|M)}{n}} \geq \lambda \exp{\frac{S(A|M)}{n}} + (1-\lambda) \exp{\frac{S(B|M)}{n}}\ ,
  \label{eq:condepi}
\end{equation}
where $S(X|M) := S(XM) - S(M)$ is the quantum conditional entropy.
The conditional independence of $A$ and $B$ given $M$ is expressed with the condition that the quantum conditional mutual information equals zero:
\begin{equation}
I(A:B | M) := S(A|M) + S(B|M) - S(AB|M) = 0\ .
  \label{eq:condindependence}
\end{equation}

Our work concerns yet another convolution operation, which mixes a probability density function $f: \mathbb{R}^{2n} \rightarrow \mathbb{R}$ on phase space with an $n$-mode quantum state $\rho$,
\begin{equation}
  (f,\rho) \mapsto f \star \rho, \qquad \text{ where } \qquad f \star \rho = \int_{\mathbb{R}^{2n}} f(\xi) D(\xi) \rho D(\xi)^\dagger \frac{\text{d}^{2n}\xi}{(2\pi)^n}\ ,
  \label{eq:cqconv}
\end{equation}
where $D(\xi)$ are the Weyl displacement operators in phase space.
This operation was first introduced by Werner in~\cite{WernerHarmonicanalysis84}. Werner established a number of results regarding \eqref{eq:cqconv}, most notably a Young-type inequality. In~\cite{HubKoeVersh16}, more inequalities involving this operation were shown, most prominently the entropy power inequality
\begin{equation}
  \exp{\frac{S(f\star\rho)}{n}} \geq  \exp{\frac{S(f)}{n}} + \exp{\frac{S(\rho)}{n}}\ .
  \label{eq:cqepi}
\end{equation}
In the context of mixing times of semigroups, the authors in \cite{dattaetal16} have used this convolution extensively and proved various properties which are related to the discussion of the entropy power inequality.

\subsection{Our contribution}
Similarly to the work carried out in \cite{depalmaconditional} for the beam splitter, we prove the conditional version of the entropy power inequality for the convolution given by \eqref{eq:cqconv}.
Let us consider an $n$-mode Gaussian quantum system $A$, a generic quantum system $M$ and a classical system $R$ which ``stores'' a classical probability density function $\rho_R: \mathbb{R}^{2n} \rightarrow \mathbb{R}$.
Let us further consider the map $\cE: AR \rightarrow C$, $(\rho_A~\otimes~\rho_R)~\mapsto~\rho_R~\star~\rho_A$, linearly extended to generic states $\rho_{AR}$ as
\begin{equation}
  \rho_C = \cE(\rho_{AR}) = \int_{\mathbb{R}^{2n}}D(\xi)\,\rho_{A|R=\xi}\,{D(\xi)}^\dag\,\rho_R(\xi)\,\frac{\mathrm{d}^{2n}\xi}{(2\pi)^n}\;.
\end{equation}
We show in \autoref{thm:mainthm} that the conditional entropy of the output of $\cE\otimes\id_M: ARM \rightarrow CM$ is lower bounded as
\begin{equation}
  \exp{\frac{S(C|M)}{n}} \geq \exp{\frac{S(A|M)}{n}} + \exp{\frac{S(R|M)}{n}}\ ,
\end{equation}
if $I(A:R|M) = 0$, i.e., the systems $A$ and $R$ are conditionally independent given the system $M$.
As a special case, this inequality implies useful inequalities about the convolution~\eqref{eq:cqconv} in the case when $R$ is uncorrelated with $M$,
\begin{equation}
 \exp{\frac{S(C|M)}{n}} \geq \exp{\frac{S(A|M)}{n}} + \exp{\frac{S(\rho_R)}{n}}\ .
\end{equation}

In the particular case when $R$ is a Gaussian random variable with probability density function $f_{Z,t} = \exp\left(-\frac{\|\xi\|^2}{2t}\right)/t^n$, the inequality becomes
\begin{equation}
  \exp{\frac{S(C|M)}{n}} \geq \exp{\frac{S(A|M)}{n}} + et\ .
\end{equation}
The special cases mentioned above are important in various applications, as we will show later.

This conditional entropy power inequality is tight in the sense that it is saturated for any couple of values of $S(A|M)$ and $S(R|M)$ by an appropriate sequence of Gaussian input states, which we show in \autoref{thm:tightness}. This behaviour is similar to the case of the beam splitter.
On the way to this inequality, several intermediate results are proven which make up a set of information-theoretic inequalities regarding conditional Fisher information and conditional entropies. To complete the picture
of information-theoretic inequalities involving quantum conditional entropies, we apply our results to prove a number of additional inequalities in a spirit similar to the classical case. Among them there are the concavity of the quantum
conditional entropy along the heat flow (\autoref{thm:concavityheat}) and an isoperimetric inequality for quantum conditional entropies (Lemma \ref{lem:fisherisoperi}). Furthermore, we show in \autoref{sec:ea} how, similar to the case of the beam splitter, the conditional entropy power inequality implies a converse bound on the entanglement-assisted classical capacity of a non-Gaussian quantum channel, the classical noise channel defined in \eqref{eq:cqconv}.

Another part of our work regards the quantum Ornstein-Uhlenbeck (qOU) semigroup. It is the one-parameter semigroup of completely positive and trace-preserving (CPTP) maps $\left\{\cP^{(\mu,\lambda)}(t) = e^{t\cL_{\mu,\lambda}}\right\}_{t \geq 0}$ on the one-mode Gaussian quantum system $A$ generated by the Liouvillian
\begin{equation}
\cL_{\mu,\lambda} = \mu^2 \cL_- + \lambda^2 \cL_+ \qquad \text{ for } \mu > \lambda > 0\ ,
\end{equation}
where
\begin{equation}
\cL_+(\rho) = a^\dagger \rho a - \frac{1}{2} \{ aa^\dagger, \rho \} \qquad \text{ and } \qquad \cL_-(\rho) = a\rho a^\dagger - \frac{1}{2} \{ a^\dagger a, \rho\}\ ,
\end{equation}
and $a$ is the ladder operator of $A$.
This quantum dynamical semigroup has a unique fixed point given by
\begin{equation}
\omega^{\mu,\lambda} := \frac{\mu^2-\lambda^2}{\mu^2} \sum_{k=0}^\infty \left(\frac{\lambda^2}{\mu^2}\right)^k \proj{k}\ ,
\end{equation}
where $\{|k\rangle\}_{k\in\mathbb{N}}$ is the Fock basis of $A$.
It has been shown in~\cite{Carlen_2017} using methods of gradient flow that the quantum Ornstein-Uhlenbeck semigroup converges in relative entropy to the fixed point at an exponential rate given by the exponent $\mu^2-\lambda^2$,
\begin{equation}
D\left(\cP^{(\mu,\lambda)}(t)(\rho) \big\| \omega^{(\mu,\lambda)} \right) \leq e^{-(\mu^2-\lambda^2)t}D\left(\rho \big\| \omega^{(\mu,\lambda)}\right) \qquad \text{ for all } t \geq 0\ ,
\label{eq:qOUrate}
\end{equation}
where $D(\rho\|\sigma) = \tr\left[\rho \left(\log\rho - \log\sigma\right) \right]$ is the quantum relative entropy~\cite{holevobook}.

We show that a simple application of the linear version of the entropy power inequality \eqref{eq:qEPI} for the beam splitter is sufficient to prove this convergence rate. We also show a simple derivation of an analogous result for the case of a bipartite quantum system $AM$, where the system $A$ undergoes a qOU evolution, using the linear conditional entropy power inequality
for the beam splitter recently proven in \cite{depalmaconditional}. Specifically, we are going to show in \autoref{thm:qOUbipartite} that
\begin{equation}
  D\left((\cP^{(\mu,\lambda)}\otimes\id_M)(\rho_{AM})\big\| \omega_A^{(\mu,\lambda)}\otimes \rho_M \right) \leq e^{-(\mu^2-\lambda^2)t} D\left(\rho_{AM} \big\| \omega_A^{(\mu,\lambda)}\otimes \rho_M \right)\ ,
  \label{eq:bipartiterate}
\end{equation}
which directly implies the statement~\eqref{eq:qOUrate}. Finite-dimensional versions of the statement~\eqref{eq:bipartiterate} for general semigroups have recently been studied by Bardet~\cite{bardet17}. Our argument shows that entropy power inequalities are a useful tool to study the convergence rate of semigroups.

The proof of the unconditioned entropy power inequality \eqref{eq:cqepi} given in \cite{HubKoeVersh16} exhibits certain regularity issues regarding the Fisher information: the Fisher information was defined as the Hessian of a relative entropy, without a proof of well-definedness. Various proofs of the entropy power inequality for the beam splitter had similar issues~\cite{KoeSmiEPI,depalmaepi14,MariPalma15}. They were settled in~\cite{depalmaconditional} by the adoption of a proof technique which starts with an integral version of the quantum Fisher information. We adopt a similar approach here. Since the conditional entropy power inequality reduces to the unconditioned inequality in the case where the system $M$ is trivial, this also gives a more rigorous proof of the unconditioned entropy power inequality. As such,
our work can be seen as both a completion of the work carried out in~\cite{HubKoeVersh16} and a generalization thereof.

We now sketch the basic structure of the proof of our main result. The main ingredients in proving entropy power inequalities~\cite{Blachman65,KoeSmiEPI,MariPalma15,depalmaconditional,HubKoeVersh16} are similar in all proofs, which all use the evolution under the heat semigroup. These ingredients are the Fisher information, de Bruijn's identity, the Stam inequality, and a result on the asymptotic scaling of the entropy under the heat flow.
First we define a ``classical-quantum'' integral conditional Fisher information, by which we mean a Fisher information of a classical system which is conditioned on a quantum system. We show in \autoref{thm:integraldebruijn} that this quantity satisfies a de Bruijn identity, which links it to the change of the conditional entropy under the heat flow. We show the regularity of the integral conditional Fisher information in \autoref{thm:regularity} and then prove the conditional Stam inequality in \autoref{thm:stam}.
In the next part, we show in \autoref{thm:scaling} that the quantum conditional entropy of a classical system undergoing the classical heat flow evolution conditioned on a quantum system satisfies the same universal scaling which was shown for the quantum conditional entropy of a quantum system undergoing the quantum heat flow evolution conditioned on a quantum system. It is crucial for the proof of our conditional entropy power inequality that these two scalings are not only both universal but
also the same. This scaling then implies that asymptotically, the inequality we want to prove becomes an equality. Then it is left to show that it is enough to consider the inequality in the asymptotic limit, i.e., the difference of the two sides of the inequality behaves under the heat flow in a way which only makes the inequality ``worse''.

The paper is structured as follows: In \autoref{sec:prelim} we present bosonic quantum systems and the relevant quantities required for our discussion. In \autoref{sec:fisherinfo}, the integral version of the quantum conditional Fisher information is adapted to the convolution \eqref{eq:cqconv}. Sections \ref{sec:stam} and \ref{sec:scaling} are dedicated to the proof of various
inequalities that are central to the proof of entropy power inequalities, such as the Stam inequality and an asymptotic scaling of the conditional entropy. Section \ref{sec:epi} then proves the conditional entropy power inequality for the convolution \eqref{eq:cqconv} as our main result.
Optimality of the conditional entropy power inequality is shown in \autoref{sec:tightness}. This is followed by
the derivation of various related information-theoretic inequalities involving the quantum conditional entropy in \autoref{sec:apps}. Before concluding, we apply the conditional entropy power inequality to bound the convergence rate of bipartite systems where one system undergoes a quantum Ornstein-Uhlenbeck semigroup evolution in \autoref{sec:qOU}.

\section{Preliminaries}
\label{sec:prelim}
Let us consider an $n$-mode bosonic system~\cite{holevobook,serafinibook} with ``position'' and ``momentum'' operators $(Q_k, P_k)$, $k = 1, \dots, n$, for each mode which satisfy the canonical commutation relations $[Q_j, P_k] = i \delta_{j,k} \id$. If we denote the vector of position and momentum operators by $R = (Q_1, P_1, \dots, Q_n, P_n)$, the canonical commutation relations become
\begin{equation}
  [R_j, R_k] = i \Delta^{jk} \id, \qquad i,j = 1, \dots, 2n \ ,
  \label{eq:canonicalcomm}
\end{equation}
where $\Delta = \begin{pmatrix}0 & 1 \\ -1 & 0 \end{pmatrix}^{\oplus n}$ is the symplectic form.

The Weyl displacement operators are defined by
\begin{equation}
  D(\xi) := \exp\left(i \xi \cdot (\Delta^{-1} R) \right), \qquad \text{ for } \xi \in \mathbb{R}^{2n}\ .
  \label{eq:displacement}
\end{equation}
The displacement operators satisfy the commutation relations
\begin{equation}
  D(\xi) D(\eta) = \exp\left( -\frac{i}{2} \xi \cdot (\Delta^{-1} \eta)\right) D(\xi + \eta), \qquad \text{ for } \xi, \eta \in \mathbb{R}^{2n}\ ,
  \label{eq:displacementcomm}
\end{equation}
and the ``displacement property'' on the mode operators
\begin{equation}
D(\xi)^\dagger R_j D(\xi) = R_j + \xi_j \id\ .
\end{equation}

Given an $n$-mode quantum state $\rho$, we define its first moments as
\begin{equation}
  d_k(\rho) := \tr[ R_k \rho ], \qquad \text{ for } k = 1, \dots, 2n\ ,
  \label{eq:firstmoments}
\end{equation}
and its covariance matrix (for finite first moments) as
\begin{equation}
  \Gamma_{kl}(\rho) := \frac{1}{2} \tr\left[\left\{R_k - d_k(\rho), R_l - d_l(\rho) \right\} \rho \right], \qquad k,l = 1, \dots, 2n\ ,
  \label{eq:covmat}
\end{equation}
with the anticommutator $\{X,Y\} := XY - YX$.

The aforementioned concepts of displacements and first and second moments are the quantum analogs of the classical concepts. For a probability distribution function $f: \mathbb{R}^{2n} \rightarrow \mathbb{R}$, we
define its displacement by a vector $\eta \in \mathbb{R}^{2n}$ as
\begin{equation}
f^{(\eta)}(\xi) = f(\xi - \eta)\ .
\label{eq:cldisplacement}
\end{equation}

Furthermore, we denote the energy of the function $f$ by the sum of its second moments,
\begin{equation}
  E(f) = \sum_{k=1}^{2n} \int_{\mathbb{R}^{2n}} \xi_k^2 f(\xi) \frac{\mathrm{d}^{2n}\xi}{(2\pi)^n}\ .
\end{equation}
The quantities $\mu_k = \int_{\mathbb{R}^{2n}} \xi_k f(\xi) \frac{\mathrm{d}^{2n}\xi}{(2\pi)^n}$ are called the first moments of $f$, and
\begin{equation}
  \gamma_{kl} = \int_{\mathbb{R}^{2n}} f(\xi) (\xi_k-\mu_k)(\xi_l-\mu_l) \frac{\mathrm{d}^{2n} \xi}{(2\pi)^n}
\end{equation}
is called the covariance matrix of $f$. We remark that we have rescaled the Lebesgue measure on $\mathbb{R}^{2n}$ in these definitions, which we have done purely for convenience.

\begin{definition}[Quantum heat semigroup] The quantum heat semigroup is the following time evolution for any quantum state $\rho$:
\begin{align}
  \cN(t) (\rho) &:= \int_{\mathbb{R}^{2n}} e^{-\frac{\|\xi\|^2}{2t}} \rho^{(\xi)} \frac{\mathrm{d}^{2n}\xi}{(2\pi t)^n} \qquad \mathrm{ for }\;  t > 0\ ,\\
\cN(0) &:= \id\ ,
\end{align}
where $\rho^{(\xi)} = D(\xi) \rho D(\xi)^\dagger$ is a displacement of the state $\rho$ by $\xi\in\mathbb{R}^{2n}$.

The quantum heat semigroup has a semigroup structure, that is, for any $s,t \geq 0$, we have
\begin{equation}
\cN(s) \circ \cN(t) = \cN(s+t)\ .
\end{equation}
\end{definition}

We note that if $f_{Z,t}(\xi) = \exp\left(-\frac{\|\xi\|^2}{2t}\right)/t^n$ is the probability distribution of a Gaussian random variable with covariance matrix $t \id_{2n}$, then  we have
\begin{equation}
\cN(t)(\rho) = f_{Z,t} \star \rho\ .
\end{equation}

The quantum heat semigroup is the quantum analog of the classical heat semigroup, which we will repeat here. It can be written in an analogous way to the quantum heat semigroup:

\begin{definition}[Classical heat semigroup] The classical heat semigroup is the following time evolution defined on a function $f: \mathbb{R}^{2n} \rightarrow \mathbb{R}$:
\begin{align}
  \left(\cN_{\mathrm{cl}}(t)(f)\right)(\eta) &:= \int_{\mathbb{R}^{2n}} e^{-\frac{\| \xi \|^2}{2t}}f^{(\xi)}(\eta) \frac{\mathrm{d}^{2n}\xi}{(2\pi t)^{n}}\ , \\
  \cN_{\mathrm{cl}}(0) &:= \id \ .
\end{align}

We also have that for any $s,t \geq 0$,
\begin{equation}
  \cN_{\mathrm{cl}}(s) \circ \cN_{\mathrm{cl}}(t) = \cN_{\mathrm{cl}}(s+t) \ .
\end{equation}
\end{definition}

We note again that we have
\begin{equation}
  \cN_{\mathrm{cl}}(t)(f) = f_{Z,t} \star f\ ,
\end{equation}
where
\begin{equation}
  (g \star f)(\eta) := \int_{\mathbb{R}^{2n}} g(\xi) f(\eta-\xi) \frac{\text{d}^{2n}\xi}{(2\pi)^n}
\end{equation}
is the well-known classical convolution of the two functions $g$ and $f$ (with a factor of $(2\pi)^n$ in the Lebesgue measure on $\mathbb{R}^{2n}$ which we introduce purely for convenience).

The convolution \eqref{eq:cqconv} is compatible with displacements and with the heat semigroup evolution in a convenient way, which is stated in the following two lemmas:

\begin{lemma}[Compatibility with displacements of the convolution \eqref{eq:cqconv}]\cite[Lemma 2]{HubKoeVersh16}
Let $f: \mathbb{R}^{2n} \rightarrow \mathbb{R}$ be a probability distribution and $\rho$ an $n$-mode quantum state. Then we have for any $\xi_1,\xi_2 \in\mathbb{R}^{2n}$,
\begin{equation}
  \left(f \star \rho\right)^{(\xi_1 + \xi_2)} = f^{(\xi_1)} \star \rho^{\left(\xi_2\right)}\ ,
\end{equation}
where $\rho^{(\xi)} = D(\xi) \rho D(\xi)^\dagger$.
\label{lem:compdisp}
\end{lemma}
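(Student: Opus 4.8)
The plan is to reduce the claimed identity to the Weyl composition law \eqref{eq:displacementcomm}; there is essentially nothing else to it. First I would unfold the left-hand side using the definition \eqref{eq:cqconv} of $\star$. Since conjugation by the fixed unitary $D(\xi_1+\xi_2)$ commutes with the (Bochner) integral, one gets
\[
  \left(f\star\rho\right)^{(\xi_1+\xi_2)} = \int_{\mathbb{R}^{2n}} f(\xi)\, D(\xi_1+\xi_2)D(\xi)\,\rho\, D(\xi)^\dagger D(\xi_1+\xi_2)^\dagger\, \frac{\mathrm{d}^{2n}\xi}{(2\pi)^n}\ .
\]
Then I would unfold the right-hand side using \eqref{eq:cldisplacement}, namely $f^{(\xi_1)}(\xi) = f(\xi-\xi_1)$, to get $f^{(\xi_1)}\star\rho^{(\xi_2)} = \int f(\xi-\xi_1)\, D(\xi)D(\xi_2)\,\rho\, D(\xi_2)^\dagger D(\xi)^\dagger\, \frac{\mathrm{d}^{2n}\xi}{(2\pi)^n}$, and perform the substitution $\xi\mapsto\xi+\xi_1$ (translation invariance of Lebesgue measure) to bring it to $\int f(\xi)\, D(\xi+\xi_1)D(\xi_2)\,\rho\, D(\xi_2)^\dagger D(\xi+\xi_1)^\dagger\, \frac{\mathrm{d}^{2n}\xi}{(2\pi)^n}$.

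After this reduction it suffices to verify, for all $\xi,\xi_1,\xi_2\in\mathbb{R}^{2n}$, the pointwise operator identity $D(\xi_1+\xi_2)D(\xi)\,\rho\, D(\xi)^\dagger D(\xi_1+\xi_2)^\dagger = D(\xi+\xi_1)D(\xi_2)\,\rho\, D(\xi_2)^\dagger D(\xi+\xi_1)^\dagger$. By \eqref{eq:displacementcomm}, each product $D(\eta)D(\zeta)$ equals $D(\eta+\zeta)$ up to a scalar phase of modulus one; hence both $D(\xi_1+\xi_2)D(\xi)$ and $D(\xi+\xi_1)D(\xi_2)$ equal $D(\xi+\xi_1+\xi_2)$ up to such a phase. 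Forming the sandwich $A\,\rho\, A^\dagger$ multiplies the phase by its complex conjugate, so it drops out, and both sides collapse to $D(\xi+\xi_1+\xi_2)\,\rho\, D(\xi+\xi_1+\xi_2)^\dagger$. This proves the identity, hence the lemma.

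I do not expect any genuine obstacle here: the whole statement is a bookkeeping exercise in the canonical commutation relations. The only two steps deserving a word of justification are the interchange of conjugation with the integral and the change of variables, both legitimate because the integrand is absolutely integrable in trace norm ($f$ is a probability density and each $D(\xi)\rho D(\xi)^\dagger$ has unit trace norm). One could equally well first establish the identity for $f$ with compact support and then extend by density, but this refinement is not needed.
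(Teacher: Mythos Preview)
Your argument is correct. The paper does not give its own proof of this lemma; it simply cites \cite[Lemma~2]{HubKoeVersh16} and remarks that the computation there (stated for parallel $\xi_1,\xi_2$) goes through verbatim in the general case. Your write-up is precisely that computation: expand both sides, change variables, and cancel the Weyl phase in the conjugation $A\rho A^\dagger$ via \eqref{eq:displacementcomm}. So the approach is the same as the one the paper is pointing to, only spelled out in full.
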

\begin{remark} Lemma 2 in~\cite{HubKoeVersh16} only states the compatibility for the case where $\xi_1, \xi_2$ are parallel. Nonetheless, the proof given there also works to prove the statement above.
\end{remark}

\begin{lemma}[Compatibility with the heat semigroup of the convolution \eqref{eq:cqconv}]\cite[Lemma 5]{HubKoeVersh16}
Assume the same prerequisites as in Lemma \ref{lem:compdisp} and let $t_1, t_2 \geq 0$. Then we have
\begin{equation}
\cN(t_1+t_2) \left(f \star \rho\right) = \cN_{\mathrm{cl}}(t_1)(f) \star \cN(t_2)(\rho)\ .
\end{equation}
\end{lemma}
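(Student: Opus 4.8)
\emph{Proof proposal.} The plan is to reduce the identity to the already-available compatibility with displacements (Lemma~\ref{lem:compdisp}) together with the Gaussian semigroup identity $f_{Z,t_1}\star f_{Z,t_2}=f_{Z,t_1+t_2}$ for the classical convolution, the only genuinely analytic point being an interchange of integrals for the operator-valued convolution~\eqref{eq:cqconv}.

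First I would rewrite the Gaussian kernel of $\cN(t_1+t_2)$ as a classical convolution of two Gaussians. Since $e^{-\|\zeta\|^2/(2t)}/(2\pi t)^n = f_{Z,t}(\zeta)/(2\pi)^n$ and $f_{Z,t_1}\star f_{Z,t_2}=f_{Z,t_1+t_2}$ — which follows either from $\cN_{\mathrm{cl}}(t_1)\circ\cN_{\mathrm{cl}}(t_2)=\cN_{\mathrm{cl}}(t_1+t_2)$ applied to a delta, or directly from the standard Gaussian integral, using that the rescaled Lebesgue measure keeps each $f_{Z,t}$ normalized — a change of variables $\eta=\zeta-\xi$ would give
\[
\cN(t_1+t_2)(f\star\rho)=\int_{\mathbb{R}^{2n}}\int_{\mathbb{R}^{2n}} f_{Z,t_1}(\xi)\,f_{Z,t_2}(\eta)\,(f\star\rho)^{(\xi+\eta)}\,\frac{\mathrm{d}^{2n}\xi\,\mathrm{d}^{2n}\eta}{(2\pi)^{2n}}\ .
\]
Next I would apply Lemma~\ref{lem:compdisp} under the integral sign to replace $(f\star\rho)^{(\xi+\eta)}$ by $f^{(\xi)}\star\rho^{(\eta)}$, and then use the separate linearity of $\star$ in each argument, together with Fubini, to pull the two scalar integrals out to the two sides of $\star$:
\[
\cN(t_1+t_2)(f\star\rho)=\left(\int_{\mathbb{R}^{2n}} f_{Z,t_1}(\xi)\,f^{(\xi)}\,\frac{\mathrm{d}^{2n}\xi}{(2\pi)^n}\right)\star\left(\int_{\mathbb{R}^{2n}} f_{Z,t_2}(\eta)\,\rho^{(\eta)}\,\frac{\mathrm{d}^{2n}\eta}{(2\pi)^n}\right)\ .
\]
Finally, I would recognize the first factor as $\int e^{-\|\xi\|^2/(2t_1)} f^{(\xi)}\,\mathrm{d}^{2n}\xi/(2\pi t_1)^n=\cN_{\mathrm{cl}}(t_1)(f)$ and the second as $\cN(t_2)(\rho)$, which closes the argument. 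An essentially equivalent route would be to first establish the associativity/commutativity identity $g\star(h\star\rho)=(g\star h)\star\rho=h\star(g\star\rho)$ for probability densities $g,h$ — which comes from $D(\xi)D(\eta)\rho D(\eta)^\dagger D(\xi)^\dagger=\rho^{(\xi+\eta)}$, since the phase factors in the Weyl relation~\eqref{eq:displacementcomm} cancel under conjugation, followed by the substitution $\zeta=\xi+\eta$ — and then to shuffle the Gaussian factors using $f_{Z,t_1}\star f_{Z,t_2}=f_{Z,t_1+t_2}$ and commutativity of the classical convolution.

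The main obstacle will be the interchange of integrals: the hybrid convolution~\eqref{eq:cqconv} is itself a Bochner integral over phase space, so expressing $\cN(t_1+t_2)(f\star\rho)$ as a genuine double integral of $f^{(\xi)}\star\rho^{(\eta)}$ and then factoring it requires a Fubini/Tonelli argument in trace norm. I expect this to be routine rather than deep: $f$, $f_{Z,t_1}$, $f_{Z,t_2}$ are probability densities, $\|\rho^{(\eta)}\|_1=\|\rho\|_1=1$ for every $\eta$, and $\|f^{(\xi)}\star\rho^{(\eta)}\|_1\le1$, so the integrand is absolutely integrable in trace norm, and continuity of $\star$ in each argument with respect to $\|\cdot\|_1$ then legitimizes pulling the scalar integrals outside. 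This is precisely the kind of regularity bookkeeping the present paper is careful to carry out.
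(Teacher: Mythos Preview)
The paper does not actually supply its own proof of this lemma; it simply quotes the result from \cite[Lemma~5]{HubKoeVersh16}. Your proposal is a correct self-contained argument: reducing to Lemma~\ref{lem:compdisp} via the Gaussian semigroup identity $f_{Z,t_1}\star f_{Z,t_2}=f_{Z,t_1+t_2}$, and then factoring the double Bochner integral, is precisely the natural route, and your trace-norm Fubini justification (using $\|f^{(\xi)}\star\rho^{(\eta)}\|_1\le 1$ uniformly) is sound.
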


\begin{definition}[Shannon differential entropy]
  For a classical $\mathbb{R}^{2n}$-valued random variable $X$ with a probability density function $f: \mathbb{R}^{2n} \rightarrow \mathbb{R}$, we define the Shannon differential entropy as
  \begin{equation}
    S(X) = S(f) = -\int_{\mathbb{R}^{2n}} f(\xi) \log f(\xi) \frac{\mathrm{d}^{2n}\xi}{(2\pi)^n}\ .
  \end{equation}
\end{definition}
We continue with a short review of Gaussian quantum states.
An $n$-mode quantum state $\rho_G$ is called Gaussian if it has the following form \cite{holevobook}:
\begin{equation}
\rho_G = \frac{\exp\left[-\frac{1}{2} \sum_{k,l =1}^{2n} \left(R_k - d_k\right) h_{kl} \left(R_l - d_l\right) \right]  }{\tr \exp\left[-\frac{1}{2}\sum_{k,l=1}^{2n}\left(R_k - d_k\right) h_{kl} \left(R_l - d_l\right) \right] }\ ,
\end{equation}
where $h$ is a positive definite real $2n \times 2n$ matrix and $d \in \mathbb{R}^{2n}$ is the vector of first moments of the state. The entropy of such a Gaussian state is given by
\begin{equation}
S(\rho_G) = \sum_{k=1}^n g\left(\nu_k - \frac{1}{2}\right)\ ,
\end{equation}
where $g(N) := (N+1)\log(N+1) - N\log N$ and $\nu_1, \dots, \nu_n$ are the symplectic eigenvalues of the covariance matrix $\Gamma = \frac{\Delta}{2}\left(\tan\frac{h\,\Delta}{2}\right)^{-1}$, i.e., the absolute values of the eigenvalues of $\Delta^{-1} \Gamma$.

A Gaussian state is called thermal if its first moments are zero and the matrix $h$ is proportional to the identity. Such thermal states have the special form
\begin{equation}
\omega_\beta = \frac{e^{-\beta H}}{\tr e^{-\beta H}}, \qquad h = \beta \id_{2n}, \qquad \beta > 0\
\end{equation}
for the Hamiltonian of $n$ harmonic oscillators $H = \frac{1}{2} \sum_{k=1}^{2n} R_k^2 - \frac{n}{2} \id$. Gaussian states fulfill a special extremality property. Among all states $\rho$ with a given average energy $\tr\left[H \rho\right]$, thermal states maximize the von Neumann entropy. Furthermore, among all states with fixed covariance matrix, the Gaussian state is the one with maximal entropy~\cite{Holevo_1999,Wolfetal06}.

In our proofs, we are going to require the notion of quantum conditional Fisher information of quantum systems which was introduced in~\cite{depalmaconditional}. We repeat the main properties of this quantity here. For a thorough definition and proofs we refer to \cite{depalmaconditional}.
Before giving this definition, we clarify the notion of ``classical-quantum'' states on a system $RM$ if the classical system $R$ is continuous.
A state $\rho_{RM}$ on $RM$ is a probability measure on $R$ which takes values in the trace class operators, i.e., a measurable collection of trace class operators on $M$ $\{\rho_{MR}(\xi)\}_{\xi\in\mathbb{R}^{2n}}$ with the normalization condition
\begin{equation}
  \int_{\mathbb{R}^{2n}} \tr_M[\rho_{MR}(\xi)] \frac{\mathrm{d}^{2n}\xi}{(2\pi)^n} = 1\ .
\end{equation}
This state ``stores'' a classical probability distribution $\rho_R$ in the classical system $R$ if its marginal on $R$ has $\rho_R$ as probability distribution.
The marginals of $\rho_{MR}$ are
\begin{equation}
  \rho_M = \int_{\mathbb{R}^{2n}}\rho_{MR}(\xi) \frac{\mathrm{d}^{2n}\xi}{(2\pi)^n},\qquad \rho_R(\xi) = \tr_M[\rho_{MR}(\xi)],
\end{equation}
and the conditional states on $M$ given the value of $\xi$ are
\begin{equation}
\rho_{M|R=\xi} = \frac{\rho_{MR}(\xi)}{\rho_R(\xi)}.
\end{equation}
We do not consider the case where the probability measure $\rho_R$ is not absolutely continuous with respect to the Lebesgue measure, since in this case its Shannon differential entropy is not defined.
For a more detailed discussion, we refer to \cite[Section III.A.3]{Furrer_2014} and references therein (\cite{Murphy1990173} and \cite[Chapter 4.6-4.7]{takesakibook}).

We can also define displacements of such a classical-quantum state: We write $\rho_{RM}^{(x, y)}$ to denote a state where the classical system $R$ has been displaced by $x \in \mathbb{R}^{2n}$ and the quantum system $M$ has been displaced by $y \in \mathbb{R}^{2n}$.

\begin{definition}[Quantum integral conditional Fisher information]\cite[Definition 6]{depalmaconditional}
  Let $A$ be an $n$-mode bosonic quantum system, and $M$ a generic quantum system. Let $\rho_{AM}$ be a quantum state on $AM$. For any $t \geq 0$, the integral Fisher information of $A$ conditioned on $M$ is given by
  \begin{align}
    \Delta_{A|M}(\rho_{AM})(t) &:= I(A:Z|M)_{\sigma_{AMZ}(t)} \geq 0\ , \qquad t > 0\ ,\\
      \Delta_{A|M}(\rho_{AM})(0) &:= 0\ ,
  \end{align}
  where $Z$ is a classical Gaussian random variable with values in $\mathbb{R}^{2n}$ and probability density function
  \begin{equation}
    f_{Z,t}(z) = \frac{e^{-\frac{|z|^2}{2t}}}{t^n},\qquad z \in \mathbb{R}^{2n}\ ,
  \end{equation}
  and $\sigma_{AMZ}(t)$ is the quantum state on $AMZ$ such that its marginal on $Z$ is $f_{Z,t}$ and for any $z \in \mathbb{R}^{2n}$,
  \begin{equation}
    \sigma_{AM|Z=z}(t) = D_A(z) \rho_{AM} D_A(z)^\dagger\ .
  \end{equation}
\end{definition}
\begin{definition}[Quantum conditional Fisher information]\cite[Definition 7, Proposition 1]{depalmaconditional}
  Let $\rho_{AM}$ be a quantum state on $AM$ such that the marginal $\rho_{A}$ has finite energy and the marginal $\rho_M$ has finite entropy. Then we define the quantum conditional Fisher information of $A$ conditioned on $M$ as
  \begin{equation}
    J(A|M)_{\rho_{AM}} := \lim_{t\rightarrow 0} \frac{\Delta_{A|M}(\rho_{AM})(t)}{t} = \frac{\mathrm{d}}{\mathrm{d}t} S(A|M)_{(\cN_A(t)\otimes\id_M)(\rho_{AM})}\bigg|_{t=0}\ .
\end{equation}
As shown in~\cite{depalmaconditional}, this limit always exists.
\end{definition}

Finally, we are going to require a notion of conditional entropy of a classical system which is conditioned on a quantum system. If the system on which we condition is classical, the conditional entropy is simply
\begin{equation}
S(A|M) = \int_M S(A|M = m) \mathrm{d}p_M(m)\ ,
\end{equation}
where $p_M$ is the probability distribution of $M$. This definition is independent of whether the system $A$ is classical or quantum. We now define the conditional entropy of a classical system which is conditioned on a quantum system
in a way such that the chain rule for entropies is preserved.

\begin{definition}[Quantum conditional entropy of classical-quantum systems]
Let $R$ be a classical system, $M$ a quantum system. We define the conditional entropy of $R$ given $M$ as
\begin{equation}
S(R|M) = S(M|R) + S(R) - S(M)\ ,
\end{equation}
whenever the three quantities appearing on the right handside are finite.
\end{definition}
The case where $S(M|R), S(R)$, and $S(M)$ are not finite will not be part of our consideration.

\section{Quantum integral conditional Fisher information}
\label{sec:fisherinfo}
In this section we consider a generic quantum system $M$ and a classical system $R$.
We are going to define the quantum integral conditional Fisher information of $R$ conditioned on $M$ and prove a de Bruijn identity as well as a number of useful properties.

\begin{definition}[quantum integral conditional Fisher information]
  For a quantum state $\rho_{RM}$ on $RM$ whose marginal on $R$ is $\rho_R: \mathbb{R}^{2n} \rightarrow \mathbb{R}$ and $t \geq 0$, define the integral Fisher information of $R$ conditioned on $M$ as
\begin{align}
\Delta_{R|M}(\rho_{RM})(t) &:= I(R:Z | M)_{\sigma_{RZM}(t)}\ ,\\
\Delta_{R|M}(\rho_{RM})(0) &:= 0\ ,
\end{align}
where $Z$ is a classical Gaussian random variable with probability density function equal to
\begin{equation}
f_{Z,t}(\xi) = \frac{e^{-\frac{|\xi|^2}{2t}}}{t^n}, \qquad \xi \in \mathbb{R}^{2n}\ ,
\end{equation}
and $\sigma_{RZM}(t)$ is the quantum state on $RZM$ such that its marginal on $Z$ is equal to $f_{Z,t}$, and for any $z \in \mathbb{R}^{2n}$, we have
\begin{equation}
  \sigma_{RM|Z=z}(t) = \rho_{RM}^{(z,0)} \ .
\end{equation}
The marginal of $\sigma_{RZM}(t)$ on $RM$ is equal to
\begin{equation}
\sigma_{RM}(t) = \left( \cN_{\mathrm{cl}}(t) \otimes \id_M\right) (\rho_{RM})\ .
\end{equation}
The marginal on $R$ has probability density function $\cN_{\mathrm{cl}}(t)(\rho_R)$.
\end{definition}

\begin{theorem}[Integral conditional de Bruijn identity]
\label{thm:integraldebruijn}
\begin{equation}
  \Delta_{R|M}(\rho_{RM})(t) = S(R|M)_{(\cN_{\mathrm{cl}}(t)\otimes \id_{M})(\rho_{RM})} - S(R|M)_{\rho_{RM}}\ .
\end{equation}
\end{theorem}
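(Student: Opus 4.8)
The plan is to unfold the conditional mutual information $I(R:Z|M)$ that defines $\Delta_{R|M}(\rho_{RM})(t)$ by means of the chain rule for the classical-on-quantum conditional entropy, and then to exploit the invariance of $S(R|M)$ under classical displacements of the register $R$. Throughout I work under the blanket assumption (made in the preliminaries) that all entropies involved are finite, so that the subtractive and chain-rule identities below are legitimate.

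For $t>0$ I start from
\begin{equation}
I(R:Z|M)_{\sigma_{RZM}(t)} = S(R|M)_{\sigma(t)} + S(Z|M)_{\sigma(t)} - S(RZ|M)_{\sigma(t)}\ .
\end{equation}
Since for a classical system $X$ (possibly the composite $RZ$) and the quantum system $M$ one has $S(X|M) = S(XM) - S(M)$ with $S(XM) := S(X) + S(M|X)$ — this being exactly consistent with the definition $S(R|M) = S(M|R) + S(R) - S(M)$ in the preliminaries — the chain rule $S(RZ|M)_{\sigma(t)} = S(Z|M)_{\sigma(t)} + S(R|ZM)_{\sigma(t)}$ holds, where $S(R|ZM)_{\sigma(t)} := \int_{\mathbb{R}^{2n}} S(R|M)_{\sigma_{RM|Z=z}(t)}\, f_{Z,t}(z)\, \tfrac{\mathrm{d}^{2n}z}{(2\pi)^n}$ because conditioning on the classical system $Z$ is just an average. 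The terms $S(Z|M)_{\sigma(t)}$ cancel, leaving
\begin{equation}
\Delta_{R|M}(\rho_{RM})(t) = S(R|M)_{\sigma(t)} - S(R|ZM)_{\sigma(t)}\ .
\end{equation}

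It remains to identify the two terms. The marginal of $\sigma_{RZM}(t)$ on $RM$ is, by definition of the integral Fisher information, $\sigma_{RM}(t) = (\cN_{\mathrm{cl}}(t)\otimes\id_M)(\rho_{RM})$, and $S(R|M)$ depends only on this marginal; hence the first term is precisely $S(R|M)_{(\cN_{\mathrm{cl}}(t)\otimes\id_M)(\rho_{RM})}$. For the second term I use that $\sigma_{RM|Z=z}(t) = \rho_{RM}^{(z,0)}$ together with the elementary lemma that $S(R|M)$ is invariant under a classical displacement of $R$: writing $S(R|M) = S(M|R) + S(R) - S(M)$, the differential entropy $S(R)$ is translation invariant, $S(M)$ is untouched, and $S(M|R) = \int S(\rho_{M|R=\xi})\,\rho_R(\xi)\,\tfrac{\mathrm{d}^{2n}\xi}{(2\pi)^n}$ is unchanged after the substitution $\xi \mapsto \xi - z$ since $\rho_R^{(z,0)}(\xi) = \rho_R(\xi-z)$ and $\rho_{M|R=\xi}^{(z,0)} = \rho_{M|R=\xi-z}$. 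Thus $S(R|M)_{\rho_{RM}^{(z,0)}} = S(R|M)_{\rho_{RM}}$ for every $z$, and because $f_{Z,t}$ is a normalized density with respect to the rescaled measure, $S(R|ZM)_{\sigma(t)} = S(R|M)_{\rho_{RM}}$. Combining the two identifications gives the claimed identity for $t>0$; for $t=0$ both sides vanish since $\cN_{\mathrm{cl}}(0) = \id$.

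The main obstacle is not a single clever step but the careful bookkeeping in the continuous classical–quantum setting: verifying that the subtractive form of the conditional entropy and the chain rule above are justified (this is where the finiteness hypotheses on $S(M|R)$, $S(R)$, $S(M)$ and their analogues for the $t$-evolved and $z$-displaced states enter), and justifying the exchange of integration and entropy implicit in treating $S(R|ZM)$ as the $z$-average of $S(R|M)_{\rho_{RM}^{(z,0)}}$. Once the definition of the classical displacement of a classical–quantum state is made fully explicit, the displacement-invariance lemma itself is routine.
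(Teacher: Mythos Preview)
Your proposal is correct and follows essentially the same approach as the paper: both reduce $I(R:Z|M)$ to $S(R|M)_{\sigma_{RM}(t)} - S(R|MZ)_{\sigma(t)}$, interpret the latter as the $z$-average of $S(R|M)_{\rho_{RM}^{(z,0)}}$, and conclude by displacement invariance of $S(R|M)$. The only cosmetic difference is that you reach the two-term form of the conditional mutual information via the three-term expansion and a chain-rule cancellation, and you spell out the displacement-invariance lemma in more detail than the paper does.
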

\begin{proof}
We use the definition of the conditional mutual information as well as the definition of the conditional quantum entropy when the system on which we condition is classical. We calculate
\begin{align}
I(R:Z | M)_{\sigma_{RMZ}} &= S(R|M)_{\sigma_{RMZ}} - S(R|MZ)_{\sigma_{RMZ}}\\
&= S(R|M)_{\sigma_{RM}} - \int_{\mathbb{R}^{2n}} S(R|M)_{\sigma_{RM|Z = z}} f_{Z,t}(z)\,\frac{\mathrm{d}^{2n}z}{(2\pi)^n} \\
&= S(R|M)_{\sigma_{RM}} - \int_{\mathbb{R}^{2n}} S(R|M)_{\rho_{RM}} f_{Z,t}(z)\,\frac{\mathrm{d}^{2n}z}{(2\pi)^n}\\
&= S(R|M)_{\sigma_{RM}} - S(R|M)_{\rho_{RM}}\ .
\end{align}
The second to last step follows because the entropy is invariant under displacements of the classical system.
\end{proof}
We now show that the integral conditional Fisher information defined as above, as a function of $t$, is continuous, increasing, and concave. The proof strategy is similar to the proof of regularity for the quantum integral conditional Fisher information given in~\cite{depalmaconditional}.

\begin{lemma}[Continuity of the integral conditional Fisher information]
  \label{lem:continuity}
Let $\rho_{RM}$ be a state such that the function $\mathbb{R}^{2n}\ni\xi\mapsto\rho_{M|R=\xi}$ is continuous with respect to the trace norm and the marginal $\rho_R$ has finite average energy.
Then, the function $t\mapsto \Delta_{R|M}(\rho_{RM})(t)$ is continuous for any $t\ge0$.
\end{lemma}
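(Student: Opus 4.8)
The plan is to use the integral de Bruijn identity from Theorem \ref{thm:integraldebruijn}, which rewrites $\Delta_{R|M}(\rho_{RM})(t) = S(R|M)_{(\cN_{\mathrm{cl}}(t)\otimes\id_M)(\rho_{RM})} - S(R|M)_{\rho_{RM}}$. Since the second term is a constant (finite by hypothesis), continuity of $t \mapsto \Delta_{R|M}(\rho_{RM})(t)$ is equivalent to continuity of $t \mapsto S(R|M)_{\sigma_{RM}(t)}$, where $\sigma_{RM}(t) = (\cN_{\mathrm{cl}}(t)\otimes\id_M)(\rho_{RM})$. I would split this conditional entropy via the chain rule $S(R|M) = S(RM) - S(M)$; but a cleaner route, and the one I expect the paper takes, is to use the symmetric form $S(R|M) = S(M|R) + S(R) - S(M)$. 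The marginal $\rho_M$ is unaffected by the classical heat flow on $R$, so $S(M)$ is constant. Thus it suffices to establish continuity of $t \mapsto S(R)_{\sigma_R(t)}$ (the classical Shannon entropy of $\cN_{\mathrm{cl}}(t)(\rho_R)$) and of $t \mapsto S(M|R)_{\sigma_{RM}(t)} = \int S(M)_{\sigma_{M|R=\xi}(t)} \, \sigma_R(t)(\xi)\, \frac{\mathrm{d}^{2n}\xi}{(2\pi)^n}$.

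For the classical entropy term, I would invoke the standard fact that $S(\cN_{\mathrm{cl}}(t)(\rho_R))$ is continuous in $t$ for $t > 0$ (the heat-flow convolution smooths $\rho_R$, and finite energy of $\rho_R$ controls the second moments, keeping the entropy finite and continuous); continuity at $t = 0$ from the right is the delicate endpoint and follows from lower semicontinuity of the entropy together with an upper bound via the maximum-entropy (Gaussian) property at fixed covariance, since the covariance of $\cN_{\mathrm{cl}}(t)(\rho_R)$ converges to that of $\rho_R$ as $t \to 0$. For the term $S(M|R)_{\sigma_{RM}(t)}$, the key observation is that the conditional state $\sigma_{M|R=\xi}(t)$ is a convex mixture: explicitly $\sigma_{M|R=\xi}(t) = \frac{1}{\sigma_R(t)(\xi)} \int f_{Z,t}(z)\, \rho_R(\xi - z)\, \rho_{M|R=\xi-z}\, \frac{\mathrm{d}^{2n}z}{(2\pi)^n}$, i.e., a heat-flow-weighted average of the original conditional states $\rho_{M|R=\cdot}$, which are continuous in their argument by hypothesis. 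I would then bound $S(M|R)_{\sigma_{RM}(t)}$ between its value at $t$ near $0$ and use dominated convergence: continuity of $\xi \mapsto \rho_{M|R=\xi}$ in trace norm plus finiteness of $S(M)$ (from the hypothesis that $S(M|R)$, $S(R)$, $S(M)$ are all finite, implicit in the definition of $S(R|M)$) gives control, and the weighting kernel $f_{Z,t}$ converges to a delta as $t \to 0$.

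The main obstacle I anticipate is the continuity at the endpoint $t = 0$, where the heat kernel degenerates and one cannot naively swap limit and integral. The resolution is the interplay between lower semicontinuity of von Neumann and Shannon entropy (which gives $\liminf_{t\to 0} S(\cdot)_{\sigma(t)} \geq S(\cdot)_{\rho}$) and an upper-bound argument controlling the limsup: for the classical part via the Gaussian maximum-entropy bound at fixed covariance, and for the quantum conditional part via concavity of entropy together with the trace-norm continuity of $\xi\mapsto\rho_{M|R=\xi}$ so that the mixture $\sigma_{M|R=\xi}(t)$ converges in trace norm to $\rho_{M|R=\xi}$. A secondary technical point is justifying that all the integrals defining $S(M|R)$ remain finite and that Fubini applies when unfolding the conditional states; this is where the finite-energy assumption on $\rho_R$ does the work, since it guarantees integrability of the energy of $\sigma_R(t)$ uniformly for $t$ in a bounded interval, and hence an integrable upper bound on $|S(M)_{\sigma_{M|R=\xi}(t)}|$ via the usual entropy-versus-energy estimate. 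Having established continuity on $(0,\infty)$ by the smoothness of the heat semigroup and at $0$ by the semicontinuity-plus-upper-bound sandwich, the proof concludes.
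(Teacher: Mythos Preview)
Your overall strategy---de Bruijn identity plus the chain rule $S(R|M)=S(M|R)+S(R)-S(M)$---matches the paper. But there are genuine gaps in the execution.

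For the $\liminf$ direction the paper simply uses data processing: $S(R|M)_{\rho_{RM}(t)}\ge S(R|M)_{\rho_{RM}}$ for all $t\ge0$, in one line. Your ``lower semicontinuity of entropy'' route is both roundabout and misdirected: Shannon differential entropy is \emph{upper} semicontinuous (under an energy bound), not lower.

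More seriously, your argument for $\limsup_{t\to0}S(M|R)_{\sigma_{RM}(t)}\le S(M|R)_{\rho_{RM}}$ does not go through. You propose dominated convergence on $\int S(\sigma_{M|R=\xi}(t))\,\sigma_R(t)(\xi)\,\mathrm{d}\xi$, using that $\sigma_{M|R=\xi}(t)\to\rho_{M|R=\xi}$ in trace norm. But $M$ is a generic, possibly infinite-dimensional, quantum system, and the von Neumann entropy is \emph{not} trace-norm continuous there---only lower semicontinuous---so pointwise convergence of the integrand fails in the direction you need (concavity also gives you the wrong inequality). Your proposed dominating function via an ``entropy-versus-energy estimate'' does not exist either: the finite-energy hypothesis is on $\rho_R$, not on $M$, and there is no Hamiltonian on $M$ to invoke.

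The paper's resolution is to rewrite
\[
S(\rho_M)-S(M|R)_{\sigma_{RM}(t)}=\int D\bigl(\sigma_{M|R=\xi}(t)\,\big\|\,\rho_M\bigr)\,\sigma_R(t)(\xi)\,\frac{\mathrm{d}^{2n}\xi}{(2\pi)^n}\;.
\]
Relative entropy is nonnegative, so Fatou applies with no dominating function needed, and it is lower semicontinuous in trace norm; combined with $\sigma_{M|R=\xi}(t)\to\rho_{M|R=\xi}$ this yields the required $\limsup$ bound. For the $\limsup$ of $S(\rho_R(t))$ the paper invokes upper semicontinuity of the Shannon differential entropy under energy constraints directly; your Gaussian maximum-entropy bound only gives $\limsup_{t\to0}S(\rho_R(t))\le S(\text{Gaussian with the covariance of }\rho_R)$, which overshoots $S(\rho_R)$ unless $\rho_R$ is itself Gaussian.
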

\begin{proof}
From the de Bruijn identity \autoref{thm:integraldebruijn}, it is sufficient to prove that
\begin{equation}
\lim_{t\to 0}S(R|M)(\rho_{RM}(t)) = S(R|M)(\rho_{RM})\;,
\end{equation}
where we have defined for any $t\ge0$,
\begin{equation}
\rho_{RM}(t) = (\cN_{\text{cl}}(t)\otimes \id_M)(\rho_{RM})\;.
\end{equation}
From the data processing inequality, for any $t\ge0$
\begin{equation}\label{eq:dpRM}
S(R|M)(\rho_{RM}(t)) \ge S(R|M)(\rho_{RM})\;.
\end{equation}
It is then sufficient to prove that
\begin{equation}
\limsup_{t\to 0}S(R|M)(\rho_{RM}(t)) \le S(R|M)(\rho_{RM})\;.
\end{equation}
We have from the chain rule
\begin{equation}
S(R|M)((\cN_{\text{cl}}(t)\otimes \id_M)(\rho_{RM})) = S(M|R)(\rho_{RM}(t)) + S(\rho_R(t)) - S(\rho_M)\;.
\end{equation}
From \cite{ambrosio2008gradient}, Remark 9.3.8, and \cite{ambrosio2000functions,buttazzo1989semicontinuity,goffman1964sublinear}, the Shannon differential entropy is upper semicontinuous on the set of probability measures on $\mathbb{R}^{2n}$ absolutely continuous with respect to the Lebesgue measure and with finite average energy, and
\begin{equation}
\limsup_{t\to0}S(\rho_R(t))\le S(\rho_R)\;.
\end{equation}
On the other hand, we have
\begin{equation}\label{eq:int}
S(\rho_M) - S(M|R)(\rho_{RM}(t)) = \int_{\mathbb{R}^{2n}}D(\rho_{M|R=\xi}(t)\|\rho_M)\,\rho_R(t)(\xi)\,\frac{\mathrm{d}^{2n}\xi}{(2\pi)^n}\;.
\end{equation}
Since the function $t\mapsto\rho_{M|R=\xi}(t)$ is continuous with respect to the trace norm, we have for any $\xi\in\mathbb{R}^{2n}$
\begin{equation}
\lim_{t\to0}\|\rho_{M|R=\xi}(t) - \rho_{M|R=\xi}\|_1=0\;.
\end{equation}
Because the relative entropy is positive, we get from Fatou's lemma
\begin{align}\label{eq:fatou}
&\int_{\mathbb{R}^{2n}}\liminf_{t\to 0}D(\rho_{M|R=\xi}(t)\|\rho_M)\,\rho_R(t)(\xi)\,\frac{\mathrm{d}^{2n}\xi}{(2\pi)^n}\nonumber\\
& \le \liminf_{t\to 0}\int_{\mathbb{R}^{2n}}D(\rho_{M|R=\xi}(t)\|\rho_M)\,\rho_R(t)(\xi)\,\frac{\mathrm{d}^{2n}\xi}{(2\pi)^n}\;.
\end{align}
Since the relative entropy is lower semicontinuous, we have for any $\xi\in\mathbb{R}^{2n}$
\begin{equation}\label{eq:sc}
D(\rho_{M|R=\xi}\|\rho_M) \le \liminf_{t\to 0}D(\rho_{M|R=\xi}(t)\|\rho_M)\;.
\end{equation}
Combining \eqref{eq:fatou}, \eqref{eq:sc}, and \eqref{eq:int}, we get
\begin{equation}
\limsup_{t\to0}S(M|R)(\rho_{RM}(t)) \le S(M|R)(\rho_{RM})\;.
\end{equation}
\end{proof}

\begin{lemma} For any $s,t \geq 0$,
\begin{equation}
  \Delta_{R|M}\left((\cN_{\mathrm{cl}}(s) \otimes \id_M)(\rho_{RM})\right)(t) = I(R:Z|M)_{(\cN_{\mathrm{cl}}(s) \otimes \id_{MZ})(\sigma_{RMZ}(t)) }\ .
\end{equation}
\end{lemma}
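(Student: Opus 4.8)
The plan is to rewrite both sides of the identity as the conditional mutual information $I(R:Z|M)$ evaluated on one and the same state on $RMZ$, after which the equality is immediate. Unfolding the definition of the integral conditional Fisher information applied to $\rho_{RM}':=(\cN_{\mathrm{cl}}(s)\otimes\id_M)(\rho_{RM})$, the left-hand side equals $I(R:Z|M)_{\sigma'_{RMZ}(t)}$, where $\sigma'_{RMZ}(t)$ is the state whose marginal on $Z$ is $f_{Z,t}$ and whose conditional states given $Z=z$ are ${\rho_{RM}'}^{(z,0)}$. The right-hand side is $I(R:Z|M)$ evaluated on $(\cN_{\mathrm{cl}}(s)\otimes\id_{MZ})(\sigma_{RMZ}(t))$, with $\sigma_{RMZ}(t)$ the state entering the definition of $\Delta_{R|M}(\rho_{RM})(t)$: marginal $f_{Z,t}$ on $Z$ and conditional states $\rho_{RM}^{(z,0)}$. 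So the whole content reduces to the identity of states $(\cN_{\mathrm{cl}}(s)\otimes\id_{MZ})(\sigma_{RMZ}(t))=\sigma'_{RMZ}(t)$.

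To verify this I would describe $(\cN_{\mathrm{cl}}(s)\otimes\id_{MZ})(\sigma_{RMZ}(t))$ through its marginal on $Z$ and its conditional states. Since the channel acts only on $R$, leaves $Z$ untouched, and is trace preserving, the marginal on $Z$ stays equal to $f_{Z,t}$, and by linearity the conditional state given $Z=z$ becomes $(\cN_{\mathrm{cl}}(s)\otimes\id_M)(\rho_{RM}^{(z,0)})$. The only remaining ingredient is the translation covariance of the classical heat semigroup, $\cN_{\mathrm{cl}}(s)(g^{(z)})=(\cN_{\mathrm{cl}}(s)(g))^{(z)}$, which is immediate because $\cN_{\mathrm{cl}}(s)$ is convolution with a Gaussian and convolution commutes with translations; lifting it to the operator-valued density $\rho_{RM}$ gives $(\cN_{\mathrm{cl}}(s)\otimes\id_M)(\rho_{RM}^{(z,0)})=\big((\cN_{\mathrm{cl}}(s)\otimes\id_M)(\rho_{RM})\big)^{(z,0)}={\rho_{RM}'}^{(z,0)}$. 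Hence the conditional states also match, the two states on $RMZ$ coincide, and taking $I(R:Z|M)$ of both finishes the argument. The degenerate case $t=0$ is trivial, both sides being zero by definition.

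I do not expect a genuine obstacle here: the statement is essentially a bookkeeping lemma whose proof is unfolding the two definitions and invoking translation covariance of $\cN_{\mathrm{cl}}$. If anything needs a moment of care, it is the claim that applying $\cN_{\mathrm{cl}}(s)\otimes\id_{MZ}$ to the ``$Z$-labelled'' state $\sigma_{RMZ}(t)$ commutes with conditioning on $Z$ --- that is, that it acts on each conditional state $\rho_{RM}^{(z,0)}$ separately while preserving the classical weights $f_{Z,t}(z)$ --- but this follows at once from linearity together with the fact that $Z$ receives the identity channel. This lemma will presumably be used next to relate the increments of $\Delta_{R|M}$ under the classical heat flow to those of the conditional entropy, on the way to the conditional Stam inequality.
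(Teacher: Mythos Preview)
Your proof is correct and essentially matches the paper's approach; the paper gives only a one-line proof (``Follows from the semigroup structure of $\cN_{\mathrm{cl}}$''), and what you have written is a careful unpacking of exactly that. If anything, your identification of \emph{translation covariance} of $\cN_{\mathrm{cl}}$ as the operative property is more precise than the paper's terse reference to the semigroup law.
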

\begin{proof}
Follows from the semigroup structure of $\cN_{\mathrm{cl}}$.
\end{proof}

\begin{lemma}
\label{lem:dp}
For any $s,t \geq 0$,
\begin{equation}
  \Delta_{R|M}\left( (\cN_{\mathrm{cl}}(s) \otimes \id_M)(\rho_{RM} \right)(t) \leq \Delta_{R|M}(\rho_{RM})(t)\ .
\end{equation}
\end{lemma}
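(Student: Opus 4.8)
The plan is to deduce the statement from the data-processing inequality for the conditional mutual information, exactly in the spirit of the regularity argument of~\cite{depalmaconditional}. By the preceding lemma, the left-hand side equals $I(R:Z|M)$ evaluated on the state $(\cN_{\mathrm{cl}}(s)\otimes\id_{MZ})(\sigma_{RMZ}(t))$, whereas by the definition of $\Delta_{R|M}$ the right-hand side is $I(R:Z|M)$ evaluated on $\sigma_{RMZ}(t)$ itself. The channel $\cN_{\mathrm{cl}}(s)\otimes\id_{MZ}$ acts nontrivially only on the classical system $R$ and leaves both $M$ and $Z$ untouched, so monotonicity of the conditional mutual information under a channel applied to one of the two mutually-informative systems gives $I(R:Z|M)_{(\cN_{\mathrm{cl}}(s)\otimes\id_{MZ})(\sigma_{RMZ}(t))}\le I(R:Z|M)_{\sigma_{RMZ}(t)}$, which is the claim.

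Since $R$ is a continuous classical system and $\cN_{\mathrm{cl}}(s)$ is a Markov kernel (convolution with the Gaussian $f_{Z,s}$), I would spell out the data-processing step concretely. Adjoin to $\sigma_{RMZ}(t)$ an auxiliary classical Gaussian variable $W$ with density $f_{Z,s}$, independent of $RMZ$; by the chain rule $I(RW:Z|M)=I(R:Z|M)+I(W:Z|MR)$ and the second term vanishes by independence, so adjoining $W$ does not change the conditional mutual information. Then apply the deterministic coarse-graining $(R,W)\mapsto R+W=:R'$, which can only decrease $I(\,\cdot\,:Z|M)$. Because convolution commutes with displacements (cf.\ \autoref{lem:compdisp}), conditioned on $Z=z$ the state of $R'M$ is $\bigl((\cN_{\mathrm{cl}}(s)\otimes\id_M)(\rho_{RM})\bigr)^{(z,0)}$, so the resulting state is precisely the one appearing in the preceding lemma; hence $I(R':Z|M)=\Delta_{R|M}\bigl((\cN_{\mathrm{cl}}(s)\otimes\id_M)(\rho_{RM})\bigr)(t)$, and composing the two steps yields the inequality.

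Conceptually there is no real obstacle: once the preceding lemma is in hand, this is a one-line data-processing argument. The only thing requiring attention is bookkeeping of finiteness, so that the conditional entropies and the chain rule are all meaningful; this is covered by the standing assumptions on finiteness of $S(M|R)$, $S(\rho_R)$, $S(\rho_M)$ together with the already-recorded fact that $\cN_{\mathrm{cl}}$ only increases $S(R|M)$ (inequality~\eqref{eq:dpRM}). One may also observe, via the de~Bruijn identity \autoref{thm:integraldebruijn} and the semigroup property of $\cN_{\mathrm{cl}}$, that the statement is equivalent to the increments of $s\mapsto S(R|M)_{(\cN_{\mathrm{cl}}(s)\otimes\id_M)(\rho_{RM})}$ being nonincreasing, i.e.\ a concavity-type property of the conditional entropy along the classical heat flow; the data-processing route is simply the most economical way to establish it.
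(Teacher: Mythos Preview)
Your proposal is correct and follows essentially the same approach as the paper: the paper's proof is the one-liner ``Follows from the data processing inequality for the quantum mutual information,'' and you have simply unpacked that line by invoking the preceding lemma and then applying monotonicity of $I(R:Z|M)$ under the local channel $\cN_{\mathrm{cl}}(s)\otimes\id_{MZ}$. Your additional remarks (the explicit $W$-adjunction/coarse-graining and the reformulation via \autoref{thm:integraldebruijn}) are sound elaborations but not a different method.
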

\begin{proof}
Follows from the data processing inequality for the quantum mutual information.
\end{proof}

\begin{lemma}
\label{lem:increasing}
For any $s,t \geq 0$,
\begin{align}
\Delta_{R|M}(\rho_{RM})(s+t) &= \Delta_{R|M}(\rho_{RM})(s) + \Delta_{R|M}\left((\cN_{\mathrm{cl}}(s) \otimes \id_M)(\rho_{RM} \right)(t)\\
&\geq \Delta_{R|M}(\rho_{RM})(s)\ .
\end{align}
\end{lemma}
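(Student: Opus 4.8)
The plan is to reduce the statement to the integral conditional de Bruijn identity of \autoref{thm:integraldebruijn}, which rewrites any $\Delta_{R|M}(\sigma_{RM})(t)$ as the increase $S(R|M)_{(\cN_{\mathrm{cl}}(t)\otimes\id_M)(\sigma_{RM})}-S(R|M)_{\sigma_{RM}}$ of the conditional entropy along the classical heat flow. Once the three integral Fisher informations in the statement are put in this form, the first (additivity) identity becomes a telescoping cancellation that uses only the semigroup property $\cN_{\mathrm{cl}}(s+t)=\cN_{\mathrm{cl}}(t)\circ\cN_{\mathrm{cl}}(s)$, and the second (monotonicity) claim follows because the extra term is manifestly non-negative.

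Concretely, I would apply \autoref{thm:integraldebruijn} to $\rho_{RM}$ at time $s+t$, insert the intermediate state $(\cN_{\mathrm{cl}}(s)\otimes\id_M)(\rho_{RM})$ via the semigroup law, and group the resulting four conditional entropies into two consecutive differences:
\begin{align}
\Delta_{R|M}(\rho_{RM})(s+t)
&= S(R|M)_{(\cN_{\mathrm{cl}}(s+t)\otimes\id_M)(\rho_{RM})}-S(R|M)_{\rho_{RM}}\\
&= \Big(S(R|M)_{(\cN_{\mathrm{cl}}(t)\otimes\id_M)((\cN_{\mathrm{cl}}(s)\otimes\id_M)(\rho_{RM}))}-S(R|M)_{(\cN_{\mathrm{cl}}(s)\otimes\id_M)(\rho_{RM})}\Big)\\
&\quad + \Big(S(R|M)_{(\cN_{\mathrm{cl}}(s)\otimes\id_M)(\rho_{RM})}-S(R|M)_{\rho_{RM}}\Big)\ .
\end{align}
Reading \autoref{thm:integraldebruijn} backwards, the first bracket equals $\Delta_{R|M}\big((\cN_{\mathrm{cl}}(s)\otimes\id_M)(\rho_{RM})\big)(t)$ and the second equals $\Delta_{R|M}(\rho_{RM})(s)$, which is exactly the claimed equality. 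The inequality then follows immediately: $\Delta_{R|M}\big((\cN_{\mathrm{cl}}(s)\otimes\id_M)(\rho_{RM})\big)(t)$ is by definition a conditional mutual information $I(R:Z|M)$ and hence $\ge 0$ --- equivalently, it is a conditional-entropy increase along $\cN_{\mathrm{cl}}$, which is non-negative by the data processing inequality as in \eqref{eq:dpRM} --- so discarding it can only decrease the right-hand side.

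I do not expect a genuine obstacle here; the only point deserving a remark is that the de Bruijn identity is meaningful only when the conditional entropies $S(R|M)$ involved are finite, which holds at times $0$, $s$, and $s+t$ under the standing finiteness hypotheses on $\rho_{RM}$. An alternative derivation of the additivity that avoids \autoref{thm:integraldebruijn} would decompose the Gaussian $Z$ of covariance $(s+t)\id$ as an independent sum of Gaussians of covariances $s\id$ and $t\id$, identify the associated two-step displacement of $\rho_{RM}$ by means of \autoref{lem:compdisp}, and then apply the chain rule for the quantum mutual information together with the compatibility and data-processing lemmas above; however, the de Bruijn argument is shorter and reuses machinery already in place, so I would present that one.
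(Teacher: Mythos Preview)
Your proposal is correct and is precisely the unpacking of the paper's one-line proof ``Follows from \autoref{thm:integraldebruijn}'': apply the integral de Bruijn identity, use the semigroup law $\cN_{\mathrm{cl}}(s+t)=\cN_{\mathrm{cl}}(t)\circ\cN_{\mathrm{cl}}(s)$ to telescope, and invoke non-negativity of the conditional mutual information for the inequality. There is nothing to add.
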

\begin{proof}
Follows from \autoref{thm:integraldebruijn}.
\end{proof}

\begin{theorem}[Regularity of the integral conditional Fisher information]
\label{thm:regularity}
For any quantum state $\rho_{RM}$ on $RM$ such that the conditions of Lemma \ref{lem:continuity} are fulfilled,
the integral conditional Fisher information $\Delta_{R|M}(\rho_{RM})(t)$ is a continuous, increasing, and concave function of $t$.
\end{theorem}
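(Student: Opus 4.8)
The plan is to obtain the three properties by assembling results already established, with only concavity requiring a genuine (if short) argument. Continuity is precisely the content of Lemma~\ref{lem:continuity}, and the statement that $t\mapsto\Delta_{R|M}(\rho_{RM})(t)$ is increasing is the second inequality in Lemma~\ref{lem:increasing}, which holds for all $s,t\ge0$. So the only thing left to do is to prove concavity, and here I would run the familiar ``heat flow only hurts the inequality'' argument: first show that the increments of $\Delta_{R|M}(\rho_{RM})(\cdot)$ over intervals of a fixed length are non-increasing, and then upgrade this to concavity using continuity.

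Concretely, fix $t\ge0$. Lemma~\ref{lem:increasing} gives the increment identity
\begin{equation}
\Delta_{R|M}(\rho_{RM})(s+t) - \Delta_{R|M}(\rho_{RM})(s) = \Delta_{R|M}\!\big((\cN_{\mathrm{cl}}(s)\otimes\id_M)(\rho_{RM})\big)(t)
\end{equation}
for every $s\ge0$. Given $0\le s_1\le s_2$, I would write $s_2=s_1+u$ with $u\ge0$ and use the semigroup property of $\cN_{\mathrm{cl}}$ to get $(\cN_{\mathrm{cl}}(s_2)\otimes\id_M)(\rho_{RM}) = (\cN_{\mathrm{cl}}(u)\otimes\id_M)\big((\cN_{\mathrm{cl}}(s_1)\otimes\id_M)(\rho_{RM})\big)$; then Lemma~\ref{lem:dp} applied to the state $(\cN_{\mathrm{cl}}(s_1)\otimes\id_M)(\rho_{RM})$ yields
\begin{equation}
\Delta_{R|M}\!\big((\cN_{\mathrm{cl}}(s_2)\otimes\id_M)(\rho_{RM})\big)(t) \le \Delta_{R|M}\!\big((\cN_{\mathrm{cl}}(s_1)\otimes\id_M)(\rho_{RM})\big)(t)\ .
\end{equation}
Hence $s\mapsto \Delta_{R|M}(\rho_{RM})(s+t)-\Delta_{R|M}(\rho_{RM})(s)$ is non-increasing on $[0,\infty)$ for every $t\ge0$. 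Choosing $s_1=a$, $s_2=\tfrac{a+b}{2}$ and $t=\tfrac{b-a}{2}$ for $0\le a<b$ gives midpoint concavity, $\Delta_{R|M}(\rho_{RM})\big(\tfrac{a+b}{2}\big)\ge\tfrac12\big(\Delta_{R|M}(\rho_{RM})(a)+\Delta_{R|M}(\rho_{RM})(b)\big)$, and combining this with the continuity from Lemma~\ref{lem:continuity} gives full concavity.

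I expect no serious obstacle here: the real work is already buried in Lemma~\ref{lem:continuity} (whose proof rests on the upper semicontinuity of the Shannon differential entropy and a Fatou argument for the relative-entropy term), while concavity uses only the additive structure of the integral Fisher information under the semigroup (Lemma~\ref{lem:increasing}) and the data-processing inequality for the quantum mutual information (Lemma~\ref{lem:dp}). The one point I would be careful about is that Lemmas~\ref{lem:increasing} and~\ref{lem:dp} are being invoked with $(\cN_{\mathrm{cl}}(s_1)\otimes\id_M)(\rho_{RM})$ in place of $\rho_{RM}$; this is harmless because the classical heat flow maps a classical-quantum state with a density on $R$ to another such state, so every quantity in the argument remains well defined, and in fact the hypotheses of Lemma~\ref{lem:continuity} are preserved, which together with the increment identity also lets one reduce continuity at an arbitrary $t>0$ to continuity at $t=0$ if needed.
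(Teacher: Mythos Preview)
Your proposal is correct and follows essentially the same route as the paper: continuity from Lemma~\ref{lem:continuity}, monotonicity from Lemma~\ref{lem:increasing}, and concavity by combining the increment identity of Lemma~\ref{lem:increasing} with the data-processing estimate of Lemma~\ref{lem:dp} to obtain midpoint concavity, then invoking continuity. The paper goes straight to the midpoint inequality (with your $a,b$ written as $s,t$) rather than first phrasing it as ``increments over intervals of fixed length are non-increasing,'' but the substance is identical.
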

\begin{proof}
Continuity was shown in Lemma \ref{lem:continuity} and the fact that the conditional Fisher information is increasing follows from Lemma \ref{lem:increasing}.

For concavity, by continuity it is enough to prove that for $0~\leq~s~\leq~t$, we have
\begin{equation}
\Delta_{R|M}(\rho_{RM})\left(\frac{s+t}{2} \right) \geq \frac{\Delta_{R|M}(\rho_{RM})(s) + \Delta_{R|M}(\rho_{RM})(t)}{2}\ .
\end{equation}

This can be written as
\begin{align}
&\Delta_{R|M}(\rho_{RM})\left( \frac{s+t}{2} \right) - \Delta_{R|M}(\rho_{RM})(s)\nonumber\\
& \geq \Delta_{R|M}(\rho_{RM})(t) - \Delta_{R|M}(\rho_{RM})\left(\frac{s+t}{2}\right)\ .
\label{eq:concav}
\end{align}
By Lemma \ref{lem:increasing}, this can be restated as
\begin{equation}
\Delta_{R|M}(\rho_{RM}(s))\left( \frac{t-s}{2}\right) \geq \Delta_{R|M}\left(\left(\cN_{\mathrm{cl}}\left(\frac{t-s}{2}\right)\otimes \id_M \right)(\rho_{RM}(s))\right)\left(\frac{t-s}{2}\right)\ ,
\end{equation}
for $\rho_{RM}(s) := \left(\cN_{\mathrm{cl}}(s)\otimes \id_M\right)(\rho_{RM})$.
But this holds because of Lemma \ref{lem:dp}.
\end{proof}

\section{Quantum conditional Fisher information}
\label{sec:stam}
\begin{definition}
For a quantum state $\rho_{RM}$ on $RM$ such that the conditions of Lemma \ref{lem:continuity} are fulfilled, we define the Fisher information of $R$ conditioned on $M$ as
\begin{equation}
  J(R|M)_{\rho_{RM}} := \lim_{t\rightarrow 0} \frac{\Delta_{R|M}(\rho_{RM})(t)}{t}\ .
\end{equation}
\end{definition}
This limit always exists because the function $t \mapsto \Delta_{R|M}(\rho_{RM})(t)$ is continuous and concave by \autoref{thm:regularity}.

\begin{proposition}[Quantum conditional de Bruijn] Assume the hypotheses of \autoref{thm:regularity}. Then we have
\begin{equation}
  J(R|M)_{\rho_{RM}} = \frac{\mathrm{d}}{\mathrm{d}t} S(R|M)_{(\cN_{cl}(t) \otimes \id_M)(\rho_{RM})}\bigg|_{t = 0}\ .
  \label{eq:debruijn}
\end{equation}
\end{proposition}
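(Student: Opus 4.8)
The plan is to obtain this as an immediate consequence of the integral conditional de Bruijn identity (\autoref{thm:integraldebruijn}) together with the regularity of $\Delta_{R|M}$ established in \autoref{thm:regularity}. First I would invoke \autoref{thm:integraldebruijn}, which states that for every $t\ge0$
\begin{equation}
\Delta_{R|M}(\rho_{RM})(t) = S(R|M)_{(\cN_{\mathrm{cl}}(t)\otimes\id_M)(\rho_{RM})} - S(R|M)_{\rho_{RM}}\ ,
\end{equation}
and observe that the subtracted term is precisely the value at $t=0$ of the function $g(t):=S(R|M)_{(\cN_{\mathrm{cl}}(t)\otimes\id_M)(\rho_{RM})}$. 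Dividing by $t>0$ and letting $t\to0^+$, the left-hand side tends to $J(R|M)_{\rho_{RM}}$ by the definition of the conditional Fisher information, while the right-hand side is exactly the difference quotient $\bigl(g(t)-g(0)\bigr)/t$ whose limit is the right derivative of $g$ at the origin. Since the classical heat semigroup $\cN_{\mathrm{cl}}(t)$ is defined only for $t\ge0$, the symbol $\frac{\mathrm{d}}{\mathrm{d}t}S(R|M)_{(\cN_{\mathrm{cl}}(t)\otimes\id_M)(\rho_{RM})}\big|_{t=0}$ in the statement is to be read as this one-sided derivative, and the claimed identity follows.

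The one remaining point is to justify that the limit genuinely exists (as an element of $[0,+\infty]$), and here I would simply appeal to \autoref{thm:regularity}: the map $t\mapsto\Delta_{R|M}(\rho_{RM})(t)$ is continuous, concave, and vanishes at $t=0$, so by concavity the chord slope $\Delta_{R|M}(\rho_{RM})(t)/t$ is non-increasing in $t$; hence $\lim_{t\to0^+}\Delta_{R|M}(\rho_{RM})(t)/t$ exists and equals $\sup_{t>0}\Delta_{R|M}(\rho_{RM})(t)/t$, which is by definition the right derivative of $g$ at $0$. This is the very same argument already used above to justify that $J(R|M)_{\rho_{RM}}$ is well defined.

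There is essentially no hard step: the proposition merely records that the two natural ways of reading the conditional Fisher information---as the initial slope of $\Delta_{R|M}$ and as the initial rate of change of $S(R|M)$ along the classical heat flow---agree, which is what \autoref{thm:integraldebruijn} says term by term. The only thing requiring care is the bookkeeping around the one-sided derivative and the use of concavity (\autoref{thm:regularity}) to guarantee that the difference quotient converges rather than oscillating; everything is allowed to take the value $+\infty$, so no finiteness hypothesis is needed beyond those already assumed in \autoref{thm:regularity}.
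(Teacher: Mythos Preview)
Your proposal is correct and follows exactly the same approach as the paper, which simply states that the identity follows from the integral conditional de Bruijn identity (\autoref{thm:integraldebruijn}). You have merely made explicit the division by $t$ and the appeal to concavity from \autoref{thm:regularity} for the existence of the limit, which the paper records separately immediately after the definition of $J(R|M)$.
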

\begin{proof}
Follows from the integral conditional de Bruijn identity given in \autoref{thm:integraldebruijn}.
\end{proof}
\subsection{Stam inequality}
\begin{theorem}
\label{thm:stam}
  Let $A$ be an $n$-mode quantum system, $R$ be a classical system and $M$ be a generic quantum system. Let $\rho_{ARM}$ be a quantum state on $ARM$ such that its marginal on $R$ has a probability density function $\rho_{R}: \mathbb{R}^{2n} \rightarrow \mathbb{R}$. Let $\rho_{ARM}$ further fulfill
\begin{equation}
  \tr[H\rho_A] < \infty, \qquad E(\rho_R) < \infty,\qquad S(\rho_M) < \infty\ .
\end{equation}
Let us suppose that $A$ and $R$ are conditionally independent given $M$,
\begin{equation}
I(A:R|M)_{\rho_{ARM}} = 0\ .
\end{equation}
Then the linear conditional Stam inequality holds,
\begin{equation}
  J(C|M)_{\rho_{CM}} \leq \lambda^2 J(A|M)_{\rho_{AM}} + (1-\lambda)^2 J(R|M)_{\rho_{RM}}\qquad \forall \lambda \in [0,1]\ ,
\label{eq:linearstam}
\end{equation}
where
\begin{equation}
  \rho_{CM} := (\cE\otimes \id_M)(\rho_{ARM}) = \int_{\mathbb{R}^{2n}}D(\xi)\,\rho_{AM|R=\xi}\,{D(\xi)}^\dag\,\rho_R(\xi)\,\frac{\mathrm{d}^{2n}\xi}{(2\pi)^n}\;.
\end{equation}
Choosing $\lambda = \frac{J(R|M)}{J(R|M) + J(A|M)}$, we obtain the conditional Stam inequality
\begin{equation}
  \frac{1}{J(C|M)_{\rho_{CM}}} \geq \frac{1}{J(A|M)_{\rho_{AM}}} + \frac{1}{J(R|M)_{\rho_{RM}}}\ .
\label{eq:stam}
\end{equation}
\end{theorem}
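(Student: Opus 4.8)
The plan is to follow the classical Blachman--Stam strategy adapted to the conditional quantum setting, exactly as in \cite{depalmaconditional}, but now for the classical-quantum convolution $\cE$. The key observation is that the heat-flow evolutions on $A$, on $R$, and on $C$ are compatible with $\cE$ via the two compatibility lemmas (\autoref{lem:compdisp} and the heat-semigroup compatibility lemma): applying the \emph{quantum} heat flow $\cN_A(\lambda^2 t)$ on $A$ and the \emph{classical} heat flow $\cN_{\mathrm{cl}}((1-\lambda)^2 t)$ on $R$ and then $\cE$ produces the same state as applying $\cE$ first and then the quantum heat flow $\cN_C(t)$ on the output $C$ — up to the rescaling of the noise variances by $\lambda^2$ and $(1-\lambda)^2$. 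Concretely, I would first record the identity
\begin{equation}
  (\cN_C(t)\otimes\id_M)(\rho_{CM}) = (\cE\otimes\id_M)\Bigl(\bigl(\cN_A(\lambda^2 t)\otimes\cN_{\mathrm{cl}}((1-\lambda)^2 t)\otimes\id_M\bigr)(\rho_{ARM})\Bigr)\ ,
\end{equation}
which follows by writing $\cN_C(t)(\cdot)=f_{Z,t}\star(\cdot)$, splitting $f_{Z,t}=f_{Z,\lambda^2 t}\star f_{Z,(1-\lambda)^2 t}$, and using the compatibility lemmas fibre-wise in $R$ (using that $\rho_{A}$ has finite energy so all the convolutions are well defined, and that conditional independence $I(A:R|M)=0$ is preserved by local channels).

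Next I would differentiate this identity in $t$ at $t=0$. The left-hand side gives $J(C|M)_{\rho_{CM}}$ by the quantum conditional de Bruijn identity (the Proposition above), since $\cN_C$ is the quantum heat flow. For the right-hand side I would use the chain-rule-type decomposition of the entropy change: introduce the two independent Gaussian noise variables $Z_A$ (variance $\lambda^2 t$, displacing $A$) and $Z_R$ (variance $(1-\lambda)^2 t$, displacing $R$) and expand
\begin{equation}
  S(C|M)_{(\cN_C(t)\otimes\id_M)(\rho_{CM})} - S(C|M)_{\rho_{CM}}
  = \Bigl(S(C|M)_{(\cdots)} - S(C|MZ_A Z_R)_{(\cdots)}\Bigr) + \text{(a term that vanishes)}\ .
\end{equation}
The cleaner route, and the one I expect to use, is to bound the derivative directly: by the data-processing inequality for the conditional mutual information, $\cE$ cannot increase the integral Fisher information, so
\begin{equation}
  \Delta_{C|M}(\rho_{CM})(t) \le \Delta_{A,R|M}\bigl((\text{joint noise of variance }\lambda^2 t,(1-\lambda)^2 t)\bigr)\ ,
\end{equation}
and, because $Z_A$ and $Z_R$ are independent and $A,R$ are conditionally independent given $M$, the right side splits (again using the chain rule for conditional mutual information together with $I(A:R|M)=0$) into $\Delta_{A|M}(\rho_{AM})(\lambda^2 t) + \Delta_{R|M}(\rho_{RM})((1-\lambda)^2 t)$. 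Dividing by $t$ and letting $t\to0$, using the definitions $J(A|M)=\lim_t \Delta_{A|M}(t)/t$ and $J(R|M)=\lim_t \Delta_{R|M}(t)/t$ (these limits exist by \autoref{thm:regularity} and the analogous quantum statement from \cite{depalmaconditional}), yields $J(C|M)_{\rho_{CM}}\le \lambda^2 J(A|M)_{\rho_{AM}} + (1-\lambda)^2 J(R|M)_{\rho_{RM}}$, which is \eqref{eq:linearstam}. The second inequality \eqref{eq:stam} then follows by the stated optimal choice $\lambda = J(R|M)/(J(R|M)+J(A|M))$ and an elementary minimization of $\lambda^2/x + (1-\lambda)^2/y$ (handling the degenerate cases $J(A|M)=0$ or $J(R|M)=0$ separately, where the bound becomes trivial or is read as an equality of reciprocals with the convention $1/0=\infty$).

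The main obstacle I anticipate is the careful justification that the splitting of $\Delta_{A,R|M}$ into the $A$-part and the $R$-part is valid, i.e. that conditioning on the joint independent noise $(Z_A,Z_R)$ and exploiting $I(A:R|M)=0$ really gives additivity of the integral Fisher information rather than merely subadditivity. This requires writing $I(AR:Z_AZ_R|M) = I(A:Z_A|M) + I(R:Z_R|AZ_A M)$ via the chain rule, and then arguing that the second term equals $I(R:Z_R|M)$; the latter step uses that $Z_R$ acts only on $R$, $Z_A$ only on $A$, the product structure of the noise, and the conditional independence of $A$ and $R$ given $M$ — which, as in \cite{depalmaconditional}, needs a short but genuine argument (e.g. noting $(AZ_A) \perp\!\!\!\perp R$ given $M$ is preserved, so conditioning further on $A Z_A$ does not change the $R$-vs-$Z_R$ correlations). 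A secondary technical point is checking that the hypotheses of \autoref{thm:regularity}/Lemma~\ref{lem:continuity} (continuity of $\xi\mapsto\rho_{M|R=\xi}$, finite energies and entropies) are inherited along the heat flow and by the marginals $\rho_{AM}$, $\rho_{RM}$, $\rho_{CM}$, so that all the conditional Fisher informations appearing are well defined and the de Bruijn identities apply; this should be routine given the finiteness assumptions $\tr[H\rho_A]<\infty$, $E(\rho_R)<\infty$, $S(\rho_M)<\infty$ in the statement.
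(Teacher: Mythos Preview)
There is a genuine gap in your approach. The identity you record,
\[
(\cN_C(t)\otimes\id_M)(\rho_{CM}) = (\cE\otimes\id_M)\Bigl(\bigl(\cN_A(\lambda^2 t)\otimes\cN_{\mathrm{cl}}((1-\lambda)^2 t)\otimes\id_M\bigr)(\rho_{ARM})\Bigr),
\]
is false. The heat-semigroup compatibility lemma states $\cN(t_1+t_2)(f\star\rho)=\cN_{\mathrm{cl}}(t_1)(f)\star\cN(t_2)(\rho)$, so applying $\cN_A(\lambda^2 t)$ and $\cN_{\mathrm{cl}}((1-\lambda)^2 t)$ on the inputs produces $\cN_C\bigl((\lambda^2+(1-\lambda)^2)t\bigr)$ on the output, not $\cN_C(t)$. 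Equivalently, your Gaussian splitting $f_{Z,t}=f_{Z,\lambda^2 t}\star f_{Z,(1-\lambda)^2 t}$ is wrong: convolution of centred Gaussians adds variances, giving $f_{Z,(\lambda^2+(1-\lambda)^2)t}$, and $\lambda^2+(1-\lambda)^2\neq 1$ unless $\lambda\in\{0,1\}$. If you push your argument through with the correct time parameter, you obtain only
\[
(\lambda^2+(1-\lambda)^2)\,J(C|M)\;\le\;\lambda^2 J(A|M)+(1-\lambda)^2 J(R|M),
\]
which is strictly weaker than \eqref{eq:linearstam}; optimising over $\lambda$ it yields merely $J(C|M)\le\min\{J(A|M),J(R|M)\}$, not the harmonic-mean bound \eqref{eq:stam}.

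The paper avoids this by using a \emph{single} noise variable $Z$ of variance $t$ and \emph{perfectly correlated} displacements: one sets $\sigma_{ARM|Z=z}=\rho_{ARM}^{(\lambda z,\,(1-\lambda)z)}$, displacing $A$ by $\lambda z$ and $R$ by $(1-\lambda)z$. Compatibility with displacements (Lemma~\ref{lem:compdisp}) then gives that $C$ is displaced by exactly $z$, so $I(C{:}Z|M)=\Delta_{C|M}(\rho_{CM})(t)$; on the marginals the effective noise on $A$ is $\lambda Z$ (variance $\lambda^2 t$) and on $R$ is $(1-\lambda)Z$ (variance $(1-\lambda)^2 t$), whence $I(A{:}Z|M)=\Delta_{A|M}(\lambda^2 t)$ and $I(R{:}Z|M)=\Delta_{R|M}((1-\lambda)^2 t)$. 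The chain-rule step then proceeds as $I(AR{:}Z|M)=I(A{:}Z|M)+I(R{:}Z|M)+I(A{:}R|MZ)-I(A{:}R|M)$, with $I(A{:}R|MZ)=0$ because displacements preserve conditional independence, and one drops $-I(A{:}R|M)\le0$. The missing idea in your proposal is precisely this use of a single correlated noise rather than two independent ones; your anticipated ``main obstacle'' (additivity of the joint integral Fisher information under independent $Z_A,Z_R$) is in fact not the difficulty --- the difficulty is getting the variance on $C$ to be $t$ in the first place.
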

\begin{proof}
We prove the following:
\begin{equation}
\Delta_{C|M}(\rho_{CM})(t) \leq \Delta_{A|M}(\rho_{AM})(\lambda^2 t) + \Delta_{R|M}(\rho_{RM})((1-\lambda)^2t)\ .
\end{equation}
Because $\Delta$ is increasing and concave the Stam inequality follows taking the derivative at $t = 0$.

By definition, we have for any $t \geq 0$ that
\begin{equation}
  \Delta_{C|M}(\rho_{CM})(t) = I(C:Z|M)_{\sigma_{CMZ}(t)}\ ,
\end{equation}
for an $\mathbb{R}^{2n}$-valued Gaussian random variable $Z$ with probability density function
\begin{equation}
  f_{Z,t}(z) = \frac{e^{-\frac{\| z \|^2}{2t}}}{t^n}, \qquad z \in \mathbb{R}^{2n}\ ,
\end{equation}
and $\sigma_{CMZ}(t)$ has $f_{Z,t}$ as marginal on $Z$ and for any $z \in \mathbb{R}^{2n}$, it fulfills
\begin{equation}
  \sigma_{CM|Z=z}(t) = D_C(z) \rho_{CM} D_C(z)^\dagger\ .
\end{equation}
We now define the state $\sigma_{ARMZ}(t)$ as the state with marginal on $Z$ equal to $f_{Z,t}$ and for any $z\in\mathbb{R}^{2n}$,
\begin{equation}
  \sigma_{ARM|Z=z} = \rho_{ARM}^{(\lambda z, (1-\lambda)z)}\ ,
\end{equation}
i.e., the system $A$ is displaced by $\lambda z$ and the system $R$ is displaced by $(1-\lambda)z$.
By compatibility of the convolution \eqref{eq:cqconv} with displacements, we have
\begin{equation}
  \sigma_{CMZ}(t) = (\cE\otimes\id_{MZ})(\sigma_{ARMZ}(t))\ .
\end{equation}
We notice that
\begin{align}
  I(A:R|MZ)_{\sigma_{ARMZ}} &= \int_{\mathbb{R}^{2n}} I(A:R|M)_{\sigma_{ARM|Z=z}}\, f_{Z,t}(z)\,\frac{\mathrm{d}^{2n}z}{(2\pi)^n} \\
  &= \int_{\mathbb{R}^{2n}} I(A:R|M)_{\rho_{ARM}} \,f_{Z,t}(z)\,\frac{\mathrm{d}^{2n}z}{(2\pi)^n} = 0\ .
\end{align}

Now we obtain by data processing
\begin{align}
I(C:Z|M)(t) &\leq I(AR:Z|M)(t) \\
&= I(A:Z|M)(t) + I(R:Z|M)(t) + I(A:R|MZ)(t) - I(A:R|M)(t)\\
&\leq I(A:Z|M)(t) + I(R:Z|M)(t)\ .
\end{align}
The last inequality follows because $I(A:R|M)(t) \geq 0$.
In analogy to~\cite[Eqs. (79)-(81)]{depalmaconditional}, we can show that
\begin{align}
  I(A:Z|M)_{\sigma_{AMZ}(t)} &= \Delta_{A|M}(\rho_{AM})\left(\lambda^2 t\right)\ ,\\
  I(R:Z|M)_{\sigma_{RMZ}(t)} &= \Delta_{R|M}(\rho_{RM})\left( (1-\lambda)^2 t\right)\ .
\end{align}
This follows from the definition of $\sigma_{ARMZ}$ and the integral conditional de Bruijn identities. The claim follows.
\end{proof}

\section{Universal scaling}
\label{sec:scaling}
\begin{theorem}
  \label{thm:scaling}
Let $R$ be a classical system and $M$ be a quantum system. Let $\rho_{RM}$ be a quantum state on $RM$ such that its marginals have finite entropies. Then we have
\begin{equation}
\lim_{t \rightarrow \infty} (S(R|M)_{(\cN_{\mathrm{cl}}(t)\otimes \id_M)(\rho_{RM})} - n \log t - n) = 0.
\end{equation}
\begin{proof}
{\bf Upper bound.} We have
\begin{align}
  S(R|M)_{(\cN_{\mathrm{cl}}(t)\otimes \id_M)(\rho_{RM})} \leq S(R)_{\cN_{\mathrm{cl}}(t)(\rho_R)}\ .
 \end{align}
 We know from the analysis of the classical heat flow \cite{Blachman65} that the right hand side scales as $n \log t + n$. \\

{\bf Lower bound.}
By concavity, we can restrict to pure $\rho_{RM}$.
The pure states of the classical-quantum system $RM$ are the tensor product of a Dirac delta on $R$ with a pure state on $M$, hence $R$ and $M$ are independent and
\begin{equation}
  S(R|M)_{(\cN_{\mathrm{cl}}(t)\otimes \id_M)(\rho_{RM})} = S(R)_{\cN_{\mathrm{cl}}(t)(\rho_R)}\ .
\end{equation}
Finally, the scaling of the classical entropy $S(R)_{\cN_{\mathrm{cl}}(t)(\rho_R)}$ is known to be equal to $n \log t + n$ from \cite{Blachman65}, which concludes the proof.
\end{proof}
\end{theorem}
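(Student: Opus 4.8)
The plan is to sandwich $S(R|M)_{(\cN_{\mathrm{cl}}(t)\otimes\id_M)(\rho_{RM})}$ between an upper and a lower bound that both converge to $n\log t+n$ as $t\to\infty$, using monotonicity and concavity of the conditional entropy together with the classical heat-flow entropy asymptotics. This parallels the scaling argument for the quantum heat flow in \cite{depalmaconditional}, the point being precisely that the classical-heat-flow scaling obtained here coincides with it.

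For the \textbf{upper bound}, I would use that, since $R$ is classical, conditioning on $M$ cannot increase its entropy, i.e.\ $S(R|M)_\sigma\le S(R)_\sigma$ for every classical-quantum state $\sigma$ — this is just $I(R:M)_\sigma\ge0$. Applying it to $\sigma=(\cN_{\mathrm{cl}}(t)\otimes\id_M)(\rho_{RM})$, whose $R$-marginal is $\cN_{\mathrm{cl}}(t)(\rho_R)=f_{Z,t}\star\rho_R$, gives $S(R|M)_{(\cN_{\mathrm{cl}}(t)\otimes\id_M)(\rho_{RM})}\le S\!\big(\cN_{\mathrm{cl}}(t)(\rho_R)\big)$. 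The right-hand side is a purely classical quantity whose large-$t$ behaviour $S(\cN_{\mathrm{cl}}(t)(f))=n\log t+n+o(1)$ is the classical heat-flow entropy scaling \cite{Blachman65}; in the paper's normalization one checks directly that $S(f_{Z,t})=n\log t+n$ and then controls the correction $S(f_{Z,t}\star\rho_R)-S(f_{Z,t})$ (bounded below by $0$ via the classical entropy power inequality and above by a Gaussian covariance estimate). This yields $\limsup_{t\to\infty}\big(S(R|M)_{(\cN_{\mathrm{cl}}(t)\otimes\id_M)(\rho_{RM})}-n\log t-n\big)\le0$.

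For the \textbf{lower bound}, the plan is to invoke concavity of the map $\sigma\mapsto S(R|M)_\sigma$ on classical-quantum states, which follows from strong subadditivity by adjoining a classical register recording a decomposition into extreme points. The extreme classical-quantum states on $RM$ are the products $\delta_{\xi_0}\otimes\ketbra{\psi}$ of a point mass on $R$ with a pure state on $M$ (to obtain such a decomposition of $\rho_{RM}$ one merely diagonalises each conditional state $\rho_{M|R=\xi}$). After the heat flow these components become the regularised product states $f_{Z,t}^{(\xi_0)}\otimes\ketbra{\psi}$, for which $R$ and $M$ are independent and $S(R|M)=S(R)=S(f_{Z,t}^{(\xi_0)})=S(f_{Z,t})=n\log t+n$ by displacement invariance of the differential entropy. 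Feeding the corresponding decomposition of $(\cN_{\mathrm{cl}}(t)\otimes\id_M)(\rho_{RM})$ into the concavity inequality gives $S(R|M)_{(\cN_{\mathrm{cl}}(t)\otimes\id_M)(\rho_{RM})}\ge n\log t+n$ for \emph{every} $t>0$, which is even stronger than what the limit needs. Combining with the upper bound proves the theorem.

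The \textbf{main obstacle} is not the upper bound (routine) but making the concavity/decomposition argument rigorous in the continuous, infinite-dimensional classical-quantum setting: one must justify that a classical-quantum state with finite marginal entropies admits an integral decomposition into $\delta_\xi\otimes\ketbra{\psi}$ extreme points, that the conditional entropy is concave (equivalently, that strong subadditivity applies) under such integral mixtures, and that the entropies stay well defined and finite once the heat flow has regularised the $R$-marginal — the same circle of measurability and absolute-continuity issues already flagged around the definition of continuous classical-quantum states. A minor additional point is to ensure the classical scaling $S(\cN_{\mathrm{cl}}(t)(\rho_R))=n\log t+n+o(1)$ holds under the sole hypothesis that $S(\rho_R)$ is finite.
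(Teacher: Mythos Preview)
Your proposal is correct and follows essentially the same route as the paper: upper bound via $S(R|M)\le S(R)$ plus the classical heat-flow scaling, and lower bound via concavity of $S(R|M)$ reducing to the extreme classical-quantum states $\delta_{\xi_0}\otimes\ketbra{\psi}$, for which the heat flow gives exactly $n\log t+n$. The paper's proof is terser and does not explicitly address the measurability and decomposition issues you flag, but the underlying argument is the same.
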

\section{Entropy power inequality}
\label{sec:epi}
\begin{theorem}[Conditional entropy power inequality for the convolution \eqref{eq:cqconv}]
\label{thm:mainthm}
  Let $A$ be an $n$-mode quantum system, $R$ a classical system and $M$ a generic quantum system. Let $\rho_{ARM}$ be a quantum state on $ARM$ such that its marginal on $R$ has a probability density function $\rho_R: \mathbb{R}^{2n} \rightarrow \mathbb{R}$. Let $\rho_{ARM}$ further fulfill
  \begin{equation}
  \tr[H\rho_A] < \infty, \qquad E(\rho_R) < \infty,\qquad S(\rho_M) < \infty\ .
\end{equation}
Let us suppose that $A$ and $R$ are conditionally independent given $M$:
\begin{equation}
I(A:R|M)_{\rho_{ARM}} = 0\ ,
\end{equation}
and let
\begin{equation}
  \rho_{CM} := (\cE\otimes \id_M)(\rho_{ARM}) = \int_{\mathbb{R}^{2n}}D(\xi)\,\rho_{AM|R=\xi}\,{D(\xi)}^\dag\,\rho_R(\xi)\,\frac{\mathrm{d}^{2n}\xi}{(2\pi)^n}\;.
\end{equation}
Then, for any $0 \leq \lambda \leq 1$ the linear conditional entropy power inequality holds:
\begin{equation}
 \frac{S(C|M)}{n} \geq \lambda \frac{S(A|M)}{n} + (1-\lambda) \frac{S(R|M)}{n} - \lambda \log \lambda - (1-\lambda) \log (1-\lambda)\ .
\end{equation}
Optimizing over $\lambda$ and choosing $\lambda = \frac{e^{S(A|M)/n}}{e^{S(A|M)/n} + e^{S(R|M)/n}}$, we obtain the conditional entropy power inequality for the convolution \eqref{eq:cqconv},
\begin{equation}
  \boxed{
    \exp{\frac{S(C|M)}{n}} \geq \exp{\frac{S(A|M)}{n}} + \exp{\frac{S(R|M)}{n}}\ .}
  \label{eq:cqcondepi}
\end{equation}
In particular, if the classical system $R$ is uncorrelated with the system $M$, we have the inequality
\begin{equation}
  \exp{\frac{S(C|M)}{n}} \geq \exp{\frac{S(A|M)}{n}} + \exp{\frac{S(\rho_R)}{n}}\ .
\end{equation}
\end{theorem}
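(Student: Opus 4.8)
The plan is to follow the classical Blachman--Stam route to the entropy power inequality, adapted to the conditional quantum setting, using the heat-flow machinery developed in the preceding sections. The central object is the function
\begin{equation}
\Phi(t) := \frac{S(C|M)_{(\cN_A(t')\otimes\id_M)(\rho_{CM})}}{n} - \lambda\,\frac{S(A|M)_{(\cN_A(?)\otimes\id_M)(\rho_{AM})}}{n} - (1-\lambda)\,\frac{S(R|M)_{(\cN_{\mathrm{cl}}(?)\otimes\id_M)(\rho_{RM})}}{n} + \lambda\log\lambda + (1-\lambda)\log(1-\lambda),
\end{equation}
but the honest way to organize it is to evolve the \emph{output} $\rho_{CM}$ under the quantum heat semigroup $\cN_A(t)$ and compare it with a matched evolution of the inputs. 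Concretely, by compatibility of the convolution with the heat semigroup (Lemma on compatibility with the heat semigroup), $\cN_A(t)(\rho_R\star\rho_A) = \cN_{\mathrm{cl}}(\lambda t)(\rho_R)\star\cN_A((1-\lambda)t)(\rho_A)$ for a suitable splitting; more precisely one wants to run the noise clock at rate $(1-\lambda)$ and the signal clock at rate $\lambda$ so that the Stam inequality of \autoref{thm:stam} applies at every time. Define
\begin{equation}
\delta(t) := \frac{S(C|M)\big((\cN_A(t)\otimes\id_M)(\rho_{CM})\big)}{n} - \lambda\,\frac{S(A|M)\big((\cN_A(\lambda t)\otimes\id_M)(\rho_{AM})\big)}{n} - (1-\lambda)\,\frac{S(R|M)\big((\cN_{\mathrm{cl}}((1-\lambda)t)\otimes\id_M)(\rho_{RM})\big)}{n}.
\end{equation}

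The steps, in order, are: (i) Show $\delta$ is differentiable and compute $\delta'(t)$ using the conditional de Bruijn identities (\autoref{thm:integraldebruijn} and its quantum analogue from \cite{depalmaconditional}); this expresses $\delta'(t)$ in terms of $J(C|M)$, $J(A|M)$, and $J(R|M)$ evaluated along the heat flow, with the chain-rule factors $\lambda\cdot\lambda$ and $(1-\lambda)\cdot(1-\lambda)$ from differentiating the rescaled clocks. (ii) Apply the \emph{linear} conditional Stam inequality \eqref{eq:linearstam}, $J(C|M)\le \lambda^2 J(A|M) + (1-\lambda)^2 J(R|M)$, which is exactly what is needed to conclude $\delta'(t)\le 0$, so $\delta$ is non-increasing. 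Here one must check that the conditional independence $I(A:R|M)=0$ is preserved along the heat flow (the signal and noise are displaced independently, so it is), so that \autoref{thm:stam} is legitimately applicable at each $t>0$. (iii) Evaluate the $t\to\infty$ limit: by the universal scaling \autoref{thm:scaling}, $S(R|M)\big((\cN_{\mathrm{cl}}((1-\lambda)t)\otimes\id_M)(\rho_{RM})\big) = n\log((1-\lambda)t) + n + o(1)$, and by the corresponding quantum scaling result from \cite{depalmaconditional}, $S(A|M)\big((\cN_A(\lambda t)\otimes\id_M)(\rho_{AM})\big) = n\log(\lambda t) + n + o(1)$ and $S(C|M)\big((\cN_A(t)\otimes\id_M)(\rho_{CM})\big) = n\log t + n + o(1)$; note these scalings are \emph{identical}, which is the crucial point flagged in the introduction. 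Substituting,
\begin{equation}
\lim_{t\to\infty}\delta(t) = \big(\log t + 1\big) - \lambda\big(\log(\lambda t) + 1\big) - (1-\lambda)\big(\log((1-\lambda)t) + 1\big) = -\lambda\log\lambda - (1-\lambda)\log(1-\lambda).
\end{equation}
(iv) Since $\delta$ is non-increasing, $\delta(0)\ge\lim_{t\to\infty}\delta(t)$, and $\delta(0)$ is precisely $\frac{S(C|M)}{n} - \lambda\frac{S(A|M)}{n} - (1-\lambda)\frac{S(R|M)}{n}$, which yields the linear conditional entropy power inequality. Finally, optimizing over $\lambda\in[0,1]$: the function $\lambda\mapsto \lambda x + (1-\lambda)y - \lambda\log\lambda - (1-\lambda)\log(1-\lambda)$ with $x=S(A|M)/n$, $y=S(R|M)/n$ is maximized at $\lambda = e^{x}/(e^{x}+e^{y})$, and the maximum value equals $\log(e^{x}+e^{y})$; this gives the boxed inequality \eqref{eq:cqcondepi}. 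The uncorrelated special case is immediate since then $S(R|M) = S(R) = S(\rho_R)$.

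\textbf{Main obstacle.} The routine-looking but genuinely delicate step is (i)--(ii): justifying that $\delta$ is differentiable for all $t>0$ and that one may differentiate under the rescaled heat clocks, i.e. that $\frac{d}{dt}S(A|M)\big((\cN_A(\lambda t)\otimes\id_M)(\rho_{AM})\big) = \lambda J(A|M)$ along the flow, and similarly for $R$ and $C$. This requires knowing that the regularity hypotheses of \autoref{thm:regularity} (and the corresponding quantum statement in \cite{depalmaconditional}) are not just satisfied by $\rho_{ARM}$ but \emph{propagate} under the heat semigroup, so that the Fisher informations are finite and the de Bruijn identities hold at every intermediate time, not only at $t=0$. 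In particular one needs the conditional state $\xi\mapsto\rho_{M|R=\xi}(t)$ to remain trace-norm continuous and $\rho_R(t)$ to retain finite energy under $\cN_{\mathrm{cl}}$, and the analogous energy/entropy finiteness for the quantum marginals under $\cN_A$; these follow because the heat semigroup only adds Gaussian noise, hence preserves finite second moments and finite entropy, but spelling this out carefully is where the real work lies. A secondary subtlety is handling the endpoints $\lambda\in\{0,1\}$ and the possibility that some Fisher information is infinite at $t=0$ (in which case the inequality is approached by first evolving slightly under the heat flow and using continuity of the entropies, exactly as in \cite{depalmaconditional}).
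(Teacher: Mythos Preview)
Your proposal is correct and follows essentially the same route as the paper: define the deficit function $\phi(t)=S(C|M)-\lambda S(A|M)-(1-\lambda)S(R|M)$ along the coupled heat flow $(\cN(\lambda t)\otimes\cN_{\mathrm{cl}}((1-\lambda)t)\otimes\id_M)$, use de Bruijn plus the linear Stam inequality to get $\phi'\le 0$, and compute $\lim_{t\to\infty}\phi(t)=n(-\lambda\log\lambda-(1-\lambda)\log(1-\lambda))$ from the matched universal scalings. The only cosmetic difference is that the paper sidesteps your differentiability worries by noting that $\phi$ is a linear combination of continuous concave functions of $t$ (so $\phi(t)-\phi(0)=\int_0^t\phi'(s)\,\mathrm{d}s$ holds without further ado); also note a small slip in your warm-up line, where the roles of $\lambda t$ and $(1-\lambda)t$ are swapped relative to your (correct) definition of $\delta$.
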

\begin{remark}
  An important case for applications is the case when $R$ has the Gaussian probability density function $f_{Z,t} = \exp\left(-\frac{\|\xi\|^2}{2t}\right)/t^n$. In this special case the inequality reads
  \begin{equation}
    \exp{\frac{S(C|M)}{n}} \geq \exp{\frac{S(A|M)}{n}} + et\ .
  \end{equation}
\end{remark}
\begin{proof}
We define the evolution
\begin{equation}
  \rho_{ARM}(t) = \left(\cN(\lambda t) \otimes \cN_{\mathrm{cl}}((1-\lambda)t)\otimes \id_M \right)(\rho_{ARM})\ .\\
\end{equation}
Then, by compatibility with the heat semigroup, this amounts to an evolution of the $C$ system given by
\begin{equation}
\rho_{CM}(t) = \left(\cN(t) \otimes \id_M \right) (\rho_{CM})\ .
\end{equation}
This evolution preserves the condition $I(A:R|M) = 0$ because of the data-processing inequality. We also define
\begin{equation}
\phi(t) = S(C|M)_{\rho_{CM}(t)} - \lambda S(A|M)_{\rho_{AM}(t)} - (1-\lambda) S(R|M)_{\rho_{RM}(t)} \ .
\end{equation}
Then we have, because of the de Bruijn identity and compatibility with the heat semigroup as well as the Stam inequality,
\begin{equation}
\phi'(t) = J(C|M)_{\rho_{CM}(t)} - \lambda^2 J(A|M)_{\rho_{AM}(t)} - (1-\lambda)^2 J(R|M)_{\rho_{CM}(t)} \leq 0\ .
\end{equation}
Since $\phi$ is a linear combination of continuous concave functions, we have for $t \geq 0$,
\begin{equation}
\phi(t) - \phi(0) = \int_{0}^t \phi'(s) \text{d}s \leq 0.
\end{equation}
Using the universal scaling, we obtain
\begin{align}
\phi(0) &\geq \lim_{t \rightarrow \infty} \phi(t) \\
&= \lim_{t \rightarrow \infty} \left(S(C|M)_{\rho_{CM}(t)} - \lambda S(A|M)_{\rho_{AM}(t)} - (1-\lambda) S(R|M)_{\rho_{RM}(t)} \right)\\
&= n(-\lambda \log \lambda - (1-\lambda)\log(1-\lambda)) \ .
\end{align}
The theorem follows.
\end{proof}
\section{Optimality of the quantum conditional entropy power inequality}
This section is dedicated to the study of the optimality of the quantum conditional entropy power inequality stated in \autoref{thm:mainthm}. We show the following theorem:
\label{sec:tightness}
\begin{theorem}[optimality of the conditional entropy power inequality]
For any $a,b \in \mathbb{R}$ there exists a sequence of states $\left\{\rho^{(k)}_{AM}\right\}_{k \in \mathbb{N}}$ and a probability density function
$f: \mathbb{R}^2 \rightarrow \mathbb{R}$ such that the classical system $R$ is uncorrelated with $M$ and
\begin{align}
  \lim_{k\rightarrow\infty} S(A|M)_{\rho^{(k)}_{AM}} = a\ ,\qquad
S(R|M)_{f} = b\ ,
\end{align}
as well as
\begin{equation}
  \lim_{k\rightarrow\infty} \exp{S(C|M)_{\rho^{(k)}_{CM}}} = \exp a + \exp b\ ,
\end{equation}
where $\rho^{(k)}_{CM} = (\cE_f \otimes \id_{M})(\rho_{AM})$ with $\cE_f(\rho_A) = f \star \rho_A$.
\label{thm:tightness}
\end{theorem}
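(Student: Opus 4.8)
The plan is to saturate the inequality with an explicit sequence of Gaussian states, exploiting that $n=1$ here, that classical Gaussian convolution is the equality case of the Shannon entropy power inequality, and the extra freedom of letting the conditioning mode $M$ carry an unbounded amount of quantum correlation with $A$. Fix $a,b\in\mathbb{R}$ and put $\sigma:=e^{b-1}>0$, $c:=e^{a-1}/2>0$. Let $f$ be the centered Gaussian probability density on $\mathbb{R}^2$ with covariance matrix $\sigma\,\id_2$; from the definition of the Shannon differential entropy one computes $S(f)=1+\log\sigma=b$ and $E(f)=2\sigma<\infty$, and since $R$ only stores $f$ and is uncorrelated with $M$ we have $S(R|M)_f=S(f)=b$. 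For $k$ large enough that $2ck+c^2\ge\tfrac14$, let $\rho^{(k)}_{AM}$ be the centered Gaussian state of the two‑mode system $AM$ with covariance matrix
\[
\Gamma^{(k)}_{AM}=\begin{pmatrix}(k+c)\,\id_2 & k\,\mathsf{Z}\\ k\,\mathsf{Z} & (k+c)\,\id_2\end{pmatrix},\qquad \mathsf{Z}=\begin{pmatrix}1&0\\0&-1\end{pmatrix}.
\]
This is a legitimate covariance matrix: $\Gamma^{(k)}_{AM}$ is positive definite (eigenvalues $c$ and $2k+c$, each doubly degenerate) and its two symplectic eigenvalues both equal $\nu_k:=\sqrt{2ck+c^2}$, which is $\ge\tfrac12$ for $k$ large. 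Hence $\rho^{(k)}_{AM}$ is a genuine state with $\tr[H\rho^{(k)}_A]=k+c-\tfrac12<\infty$ and $S(\rho^{(k)}_M)<\infty$, and $R$ is uncorrelated with $M$.

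The marginal $\rho^{(k)}_M$ is thermal with symplectic eigenvalue $k+c$, so $S(\rho^{(k)}_M)=g(k+c-\tfrac12)$, while $S(\rho^{(k)}_{AM})=2\,g(\nu_k-\tfrac12)$ by the Gaussian entropy formula. Using $g(N)=\log N+1+O(1/N)$ as $N\to\infty$,
\[
S(A|M)_{\rho^{(k)}_{AM}}=S(\rho^{(k)}_{AM})-S(\rho^{(k)}_M)=\log\frac{2ck+c^2}{k+c}+1+o(1)\ \xrightarrow[k\to\infty]{}\ \log(2ce)=a,
\]
by the choice of $c$. The map $\cE_f\colon\rho_A\mapsto f\star\rho_A$ is the classical Gaussian noise channel, so by \autoref{lem:compdisp} (or a one‑line computation using $D(\xi)^{\dagger}R_jD(\xi)=R_j+\xi_j$ together with the invariance of covariance matrices under displacements) $\rho^{(k)}_{CM}:=(\cE_f\otimes\id_M)(\rho^{(k)}_{AM})$ is again Gaussian, with only the $A$‑block of the covariance matrix shifted by $+\sigma\,\id_2$:
\[
\Gamma^{(k)}_{CM}=\begin{pmatrix}(k+c+\sigma)\,\id_2 & k\,\mathsf{Z}\\ k\,\mathsf{Z} & (k+c)\,\id_2\end{pmatrix}.
\]

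It remains to compute $S(C|M)_{\rho^{(k)}_{CM}}=S(\rho^{(k)}_{CM})-S(\rho^{(k)}_M)$. One has $\det\Gamma^{(k)}_{AM}=(2ck+c^2)^2$ and $\det\Gamma^{(k)}_{CM}=\bigl((2c+\sigma)k+c(c+\sigma)\bigr)^2$, and both symplectic eigenvalues of $\Gamma^{(k)}_{CM}$ diverge: the smaller one is at least $\det\Gamma^{(k)}_{CM}/\tilde\Delta^{(k)}$, where $\tilde\Delta^{(k)}=\det\Gamma^{(k)}_C+\det\Gamma^{(k)}_M+2\det(k\mathsf{Z})=(k+c+\sigma)^2+(k+c)^2-2k^2$ is of order $k$ while $\det\Gamma^{(k)}_{CM}$ is of order $k^2$. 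Since for any sequence of $m$‑mode Gaussian states all of whose symplectic eigenvalues tend to infinity one has $S(\rho)=\tfrac12\log\det\Gamma+m+o(1)$, we get
\[
S(C|M)_{\rho^{(k)}_{CM}}-S(A|M)_{\rho^{(k)}_{AM}}=S(\rho^{(k)}_{CM})-S(\rho^{(k)}_{AM})=\tfrac12\log\frac{\det\Gamma^{(k)}_{CM}}{\det\Gamma^{(k)}_{AM}}+o(1)\ \xrightarrow[k\to\infty]{}\ \log\frac{2c+\sigma}{2c}.
\]
Exponentiating, using $S(A|M)_{\rho^{(k)}_{AM}}\to a$ and plugging in $c=e^{a-1}/2$, $\sigma=e^{b-1}$,
\[
\exp S(C|M)_{\rho^{(k)}_{CM}}\ \xrightarrow[k\to\infty]{}\ e^a\cdot\frac{2c+\sigma}{2c}=e^a+\frac{\sigma}{2c}\,e^a=e^a+\sigma e=e^a+e^b,
\]
which is the claim.

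The only real work is the asymptotic bookkeeping in the last two paragraphs: one must verify that every symplectic eigenvalue of $\Gamma^{(k)}_{AM}$ and of $\Gamma^{(k)}_{CM}$ tends to infinity, so that $g(\nu-\tfrac12)=\log\nu+1+o(1)$ applies term by term and the gap between $\tfrac12\log\det\Gamma+m$ and the exact von Neumann entropy is $o(1)$, and then control the two $2\times2$‑block determinants. For $\Gamma^{(k)}_{AM}$ the symplectic spectrum is explicit; for $\Gamma^{(k)}_{CM}$ it is extracted from the standard two‑mode invariants $\tilde\Delta^{(k)}$ and $\det\Gamma^{(k)}_{CM}$, for which the crude lower bound above on the smaller eigenvalue already suffices. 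It is precisely the $\mathsf{Z}$ (rather than $\id_2$) in the off‑diagonal block that forces both symplectic eigenvalues to diverge and thereby makes the conditional convolution tight in the limit; everything else follows from the Gaussian entropy formula, \autoref{lem:compdisp}, and the definitions.
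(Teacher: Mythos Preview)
Your proof is correct and follows essentially the same approach as the paper: both construct a sequence of two-mode Gaussian states on $AM$ with EPR-type (\,$\mathsf{Z}$-block) correlations growing without bound and convolve with a centered Gaussian $f$ of covariance $e^{b-1}\id_2$, then extract the limiting conditional entropies from the Gaussian entropy formula. The only cosmetic differences are that the paper starts from a pure two-mode squeezed state and applies the heat semigroup $\cN(e^{a-1})$ (yielding asymmetric diagonal blocks $k^2+t$ and $k^2$), computing the symplectic eigenvalues explicitly, whereas you write down a mixed Gaussian with symmetric diagonal blocks directly and use the asymptotic $S=\tfrac12\log\det\Gamma+m+o(1)$; one small slip is that your lower bound $\det\Gamma^{(k)}_{CM}/\tilde\Delta^{(k)}$ actually bounds $\nu_-^2$ rather than $\nu_-$, but this is immaterial since either quantity diverges.
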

\begin{proof}
Let $\sigma^{(k)}_{AM}$ be the Gaussian state with the covariance matrix
\begin{equation}
  \Gamma^{(k)}_{AM} = \begin{pmatrix}k^2 & 0 & \sqrt{k^4-\frac{1}{4}} & 0\\ 0 & k^2 & 0 & -\sqrt{k^4-\frac{1}{4}}\\ \sqrt{k^4 - \frac{1}{4}} & 0 & k^2 & 0\\ 0 & -\sqrt{k^4-\frac{1}{4}} & 0 & k^2 \end{pmatrix}\ .
\end{equation}
Applying the heat semigroup on the quantum system $A$, we obtain the state $(\cN(t) \otimes \id_M)(\sigma^{(k)}_{AM})$ which has the covariance matrix
\begin{equation}
  \Gamma^{(k)}_{AM}(t) = \begin{pmatrix}k^2+t & 0 & \sqrt{k^4-\frac{1}{4}} & 0\\ 0 & k^2+t & 0 & -\sqrt{k^4-\frac{1}{4}}\\ \sqrt{k^4 - \frac{1}{4}} & 0 & k^2 & 0\\ 0 & -\sqrt{k^4-\frac{1}{4}} & 0 & k^2 \end{pmatrix}\ .
\end{equation}
The symplectic eigenvalues of this covariance matrix are
\begin{equation}
  \nu_{\pm}^{(k)}(t) = \frac{1}{2}\sqrt{4k^2 t \pm 2 t\sqrt{4k^2 t + t^2 + 1} + 2t^2 + 1} = k\sqrt{t} + \cO(1) \qquad (k \rightarrow\infty)\ .
\end{equation}
Hence we have
\begin{align}
 S(AM)_{(\cN(t)\otimes\id_M)(\sigma^{(k)}_{AM})} &= g\left(\nu_+-\frac{1}{2}\right) + g\left(\nu_- -\frac{1}{2}\right) \\
 &= \log k^2 + \log t + 2 + \cO\left(\frac{1}{k^2}\right)\ , \\
 S(M)_{(\cN(t)\otimes\id_M)(\sigma^{(k)}_{AM})} &= g\left(k^2-\frac{1}{2}\right) = \log k^2 + 1 + \cO\left(\frac{1}{k^4}\right)\ .
\end{align}
It follows that
\begin{equation}
 \lim_{k \rightarrow \infty} S(A|M)_{(\cN(t)\otimes\id_M)(\sigma^{(k)}_{AM})} = 1 + \log t\ .
\end{equation}
We now choose $\rho^{(k)}_{AM} = (\cN(e^{a-1}) \otimes \id_M)(\sigma^{(k)}_{AM})$, which fulfills
\begin{equation}
 \lim_{k \rightarrow \infty} S(A|M)_{(\rho^{(k)}_{AM})} = a\ .
\end{equation}
We further choose the classical system $R$ to be uncorrelated with $M$ and have probability density function
\begin{equation}
f = f_{Z,e^{b-1}} = \frac{e^{-\frac{\|\xi\|^2}{2e^{b-1}}}}{e^{b-1}}
\end{equation}
of a Gaussian with covariance matrix $e^{b-1}\id_2$. Then we have for the entropy
\begin{equation}
 S(R|M)_f = \log\left(e e^{b-1}\right) = b\ .
\end{equation}
The state $\rho^{(k)}_{CM}$ has the covariance matrix
\begin{equation}
  \Gamma^{(k)}_{CM} = \begin{pmatrix}k^2+e^{a-1}+e^{b-1} & 0 & \sqrt{k^4-\frac{1}{4}} & 0\\ 0 & k^2+e^{a-1}+e^{b-1} & 0 & -\sqrt{k^4-\frac{1}{4}}\\ \sqrt{k^4 - \frac{1}{4}} & 0 & k^2 & 0\\ 0 & -\sqrt{k^4-\frac{1}{4}} & 0 & k^2 \end{pmatrix}\ .
\end{equation}
Analogously to the calculation above, we obtain now
\begin{equation}
  \lim_{k\rightarrow\infty}S(C|M)_{\rho^{(k)}_{CM}} = 1+\log\left(e^{a-1}+e^{b-1}\right) = \log\left(e^a + e^b\right)\ ,
\end{equation}
and finally
\begin{equation}
  \lim_{k\rightarrow\infty}\exp{S(C|M)_{\rho^{(k)}_{CM}}} = e^a + e^b\ .
\end{equation}
\end{proof}
\section{Applications}
\label{sec:apps}
The quantum conditional entropy power inequality~\eqref{eq:cqcondepi} has various applications in the derivation of information-theoretic inequalities. We are going to show a variety of results regarding quantum conditional entropies. Many of these results have direct analogs in the case of unconditioned quantum entropies as well as in classical information theory.
\subsection{Isoperimetric inequality for conditional entropies}
\begin{lemma}[Quantum conditional Fisher information isoperimetric inequality]
\label{lem:fisherisoperi}
\begin{equation}
  \frac{\mathrm{d}}{\mathrm{d}t} \left[ \frac{1}{n} J(A|M)_{(\cN(t)\otimes \id_M)(\rho_{AM})} \right]^{-1} \bigg|_{t=0} \geq 1\ .
  \label{eq:fisheriso}
\end{equation}
\end{lemma}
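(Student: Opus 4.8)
The plan is to reduce the isoperimetric inequality to the conditional Stam inequality (\autoref{thm:stam}), exactly as in the classical and unconditioned quantum cases. The key observation is that the heat semigroup evolution $\cN(t)$ is itself an instance of the convolution \eqref{eq:cqconv} with a Gaussian noise distribution: $\cN(t)(\rho) = f_{Z,t}\star\rho$. So I would set up the Stam inequality with the quantum system $A$ playing its usual role, and with the classical noise system $R$ carrying the Gaussian density $f_{Z,s}$ for a small parameter $s>0$, uncorrelated with everything else (so $I(A:R|M)=0$ is automatic). Then the output system $C$ is precisely $A$ after heat-flow time $s$, i.e.\ $\rho_{CM} = (\cN(s)\otimes\id_M)(\rho_{AM})$, and the Stam inequality \eqref{eq:stam} reads
\begin{equation}
\frac{1}{J(A|M)_{(\cN(s)\otimes\id_M)(\rho_{AM})}} \geq \frac{1}{J(A|M)_{\rho_{AM}}} + \frac{1}{J(R|M)_{f_{Z,s}}}\ .
\end{equation}
The remaining input is the value of the Fisher information of a pure Gaussian noise term. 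Since $R$ is uncorrelated with $M$, $J(R|M)_{f_{Z,s}} = J(f_{Z,s})$, and by the classical de Bruijn identity this is $\frac{\mathrm d}{\mathrm dr}S(f_{Z,s+r})\big|_{r=0}$; using $S(f_{Z,t}) = n\log t + n\log(2\pi e)/(2\pi) $ — more precisely, from the normalization conventions in the paper, $S(f_{Z,t})$ differs from $n\log t$ by an $n$-dependent constant — one gets $J(f_{Z,s}) = n/s$.

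With this, the Stam inequality becomes
\begin{equation}
\frac{1}{J(A|M)_{(\cN(s)\otimes\id_M)(\rho_{AM})}} \geq \frac{1}{J(A|M)_{\rho_{AM}}} + \frac{s}{n}\ ,
\end{equation}
or equivalently, writing $\psi(s) := \big[\tfrac1n J(A|M)_{(\cN(s)\otimes\id_M)(\rho_{AM})}\big]^{-1}$,
\begin{equation}
\psi(s) - \psi(0) \geq s\qquad\text{for all } s\geq 0\ .
\end{equation}
Dividing by $s$ and letting $s\to0^+$ gives $\psi'(0)\geq 1$, which is \eqref{eq:fisheriso}. (One can also phrase the last step without differentiating $\psi$ directly: the function $t\mapsto 1/J(A|M)_{(\cN(t)\otimes\id_M)(\rho_{AM})}$ is the reciprocal of $\frac{\mathrm d}{\mathrm dt}S(A|M)$, and the semigroup property of $\cN$ together with the above shows it has lower-difference-quotients at $0$ bounded below by $1$; taking the derivative is legitimate because $J(A|M)$ under the heat flow is the derivative of a concave entropy, hence monotone, so the relevant one-sided derivative exists.)

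The main obstacle is not conceptual but bookkeeping: one must check that the hypotheses of \autoref{thm:stam} (finite energy of $\rho_A$, finite entropy of $\rho_M$, and the regularity condition from Lemma \ref{lem:continuity} on the state $\rho_{ARM} = \rho_{AM}\otimes f_{Z,s}$) are met — the tensor-product structure with a smooth Gaussian makes the continuity-in-$\xi$ condition automatic — and that the constant in $J(f_{Z,s}) = n/s$ comes out correctly under the paper's rescaled Lebesgue measure convention. A secondary subtlety is justifying the passage from the finite-$s$ inequality $\psi(s)\ge \psi(0)+s$ to the derivative statement at $t=0$; this is clean because $\psi$ is defined as a reciprocal of a Fisher information that, by the conditional de Bruijn identity and \autoref{thm:regularity}, equals the right derivative of the concave function $t\mapsto S(A|M)_{(\cN(t)\otimes\id_M)(\rho_{AM})}$, so $1/J(A|M)$ is lower semicontinuous and the one-sided derivative in \eqref{eq:fisheriso} exists and satisfies the bound.
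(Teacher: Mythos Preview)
Your proposal is correct and follows essentially the same route as the paper: apply the conditional Stam inequality \eqref{eq:stam} with the classical noise taken to be the Gaussian $f_{Z,s}$ (uncorrelated with $M$), use $J(f_{Z,s})=n/s$ to obtain $J(A|M)_{(\cN(s)\otimes\id_M)(\rho_{AM})}^{-1}-J(A|M)_{\rho_{AM}}^{-1}\ge s/n$, then divide by $s$ and send $s\to 0$. Your discussion of why the one-sided derivative exists is actually more careful than the paper's, which simply passes to the limit without further comment.
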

\begin{proof}
  We note that $(\cN(t)\otimes \id_M)(\rho_{AM}) = ( \cE_{f_{Z,t}}\otimes \id_M)(\rho_{AM})$, where again $\cE_{f}(\rho_A) = f\star \rho_A$ and $f_{Z,t}(\xi) = \exp\left(-\frac{\|\xi\|^2}{2t}\right)/t^n$. Applying the conditional Stam inequality \eqref{eq:stam}, we obtain
\begin{equation}
  \left(J(A|M)_{(\cN(t)\otimes \id_M)(\rho_{AM})}^{-1} - J(A|M)_{\rho_A}^{-1} \right) \geq J(R|M)_{f_{Z,t}}^{-1} = \frac{t}{n} \ .
\end{equation}
This implies
\begin{equation}
  \frac{1}{t} \left(J(A|M)_{(\cN(t)\otimes \id_M)(\rho_{AM})}^{-1} - J(A|M)_{\rho_{AM}}^{-1} \right) \geq \frac{1}{n}\ .
\end{equation}
Taking the limit $t\rightarrow 0$ implies the result.
\end{proof}
\begin{theorem}[Isoperimetric inequality for quantum conditional entropies]
\begin{equation}
\frac{1}{n}J(A|M)_{\rho_{AM}} \exp{\frac{S(A|M)_{\rho_{AM}}}{n}} \geq e\ .
\end{equation}
\end{theorem}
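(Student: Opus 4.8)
The plan is to run the classical Dembo--Cover--Thomas argument in the present conditional setting, combining the conditional Stam inequality~\eqref{eq:stam}, the conditional de Bruijn identity, and the universal scaling of the conditional entropy along the heat flow.

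First I would track the relevant quantities along the quantum heat flow: set $k(t) := J(A|M)_{(\cN(t)\otimes\id_M)(\rho_{AM})}$ and $s(t) := S(A|M)_{(\cN(t)\otimes\id_M)(\rho_{AM})}$. Applying the conditional Stam inequality~\eqref{eq:stam} with the classical system $R$ chosen to be a Gaussian of covariance $t\,\id_{2n}$ uncorrelated with $AM$ --- so that $I(A:R|M)=0$ holds automatically and $\rho_{CM}=(\cN(t)\otimes\id_M)(\rho_{AM})$, while $J(R|M)_{f_{Z,t}} = n/t$ --- gives $k(t)^{-1} \ge k(0)^{-1} + t/n$, equivalently $k(t)\le \tfrac{n\,k(0)}{n+t\,k(0)}$. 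This is precisely the estimate appearing in the proof of Lemma~\ref{lem:fisherisoperi}. If $k(0)=+\infty$ the claimed inequality is trivially true, and $k(0)=0$ cannot occur since $s$ strictly increases; so we may assume $0<k(0)<\infty$.

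Next, by the conditional de Bruijn identity (the Proposition in Section~\ref{sec:stam}, together with its quantum counterpart from~\cite{depalmaconditional}, pushed along the flow via the semigroup property $\cN(s)\circ\cN(t)=\cN(s+t)$), one has $s'(t)=k(t)$ for $t\ge0$. Integrating the bound on $k$ yields $s(t)\le s(0) + n\log\!\big(1+\tfrac{t\,k(0)}{n}\big)$. I would then invoke the universal scaling --- the analog for a quantum system $A$ conditioned on $M$ of Theorem~\ref{thm:scaling}, established in~\cite{depalmaconditional} --- which states $s(t) = n\log t + n + o(1)$ as $t\to\infty$. Comparing the two expressions for $s(t)$ as $t\to\infty$, the $n\log t$ terms cancel and what remains is $n \le s(0) + n\log\!\big(\tfrac{k(0)}{n}\big)$, i.e.\ $\tfrac{1}{n}J(A|M)_{\rho_{AM}}\exp\!\big(\tfrac{S(A|M)_{\rho_{AM}}}{n}\big)\ge e$.

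The integration and the $t\to\infty$ limit are routine; the only points requiring a little care are that the de Bruijn identity be used as a genuine (two-sided) derivative all along the flow rather than merely as a one-sided derivative at the origin --- which follows from the concavity in Theorem~\ref{thm:regularity} and its quantum analog --- and that the scaling be quoted in its quantum-system form, since Theorem~\ref{thm:scaling} as stated here concerns a classical $R$. I do not expect either to be a genuine obstacle, so the proof should be short.
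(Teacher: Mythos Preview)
Your argument is correct, but it takes a different route from the paper's. The paper obtains the isoperimetric inequality as an immediate corollary of the conditional entropy power inequality~\eqref{eq:cqcondepi}: writing $P(t)=\exp\bigl(S(A|M)_{(\cN(t)\otimes\id_M)(\rho_{AM})}/n\bigr)$, the de~Bruijn identity gives $P'(0)=\tfrac{1}{n}J(A|M)_{\rho_{AM}}\exp\bigl(S(A|M)_{\rho_{AM}}/n\bigr)$, while the EPI applied with $R$ Gaussian of covariance $t\,\id_{2n}$ gives $P(t)-P(0)\ge et$; dividing by $t$ and letting $t\to0$ yields the claim in one step.

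You instead run the Dembo--Cover--Thomas argument directly: bound $k(t)$ via Stam, integrate using de~Bruijn to control $s(t)$, and compare with the universal scaling as $t\to\infty$. This avoids invoking the full EPI and uses only its ingredients (Stam, de~Bruijn, asymptotic scaling), which is conceptually cleaner if one wants the isoperimetric inequality without passing through \autoref{thm:mainthm}. The price is that you must justify the two points you flag --- that $s'(t)=k(t)$ holds for all $t\ge0$ (not just as a right derivative at the origin) and that the scaling for a \emph{quantum} $A$-system is the one from~\cite{depalmaconditional} rather than \autoref{thm:scaling} --- both of which are indeed available. The paper's proof is shorter because the heavy lifting has already been absorbed into the EPI; yours makes the dependence on the primitive inequalities explicit.
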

\begin{proof}
We apply the conditional de Bruijn identity \cite[Eq. 63]{depalmaconditional} and see that
\begin{equation}
  \frac{\mathrm{d}}{\mathrm{d}t} \exp{\frac{S(A|M)_{(\cN(t)\otimes \id_M)(\rho_{AM})}}{n} }\bigg|_{t=0} = \frac{1}{n}J(A|M)_{\rho_{AM}} \exp{\frac{S(A|M)_{\rho_{AM}}}{n}}\ .
\end{equation}
Recalling once again that $(\cN(t)\otimes \id_M)(\rho_{AM}) = (\cE_{f_{Z,t}}\otimes \id_M)(\rho_{AM})$ and inserting this into the conditional entropy power inequality \eqref{eq:cqcondepi} yields
\begin{equation}
  \exp{\frac{S(A|M)_{(\cN(t)\otimes\id_M)(\rho_{AM})}}{n}} - \exp{\frac{S(A|M)}{n}} \geq \exp{\frac{S(R|M)}{n}} = et\ .
\end{equation}
Dividing this equation by $t$ and taking the limit $t \rightarrow 0$ concludes the proof of the theorem.
\end{proof}
\subsection{Concavity of the quantum conditional entropy power along the heat flow}
\begin{theorem}[Concavity of the quantum conditional entropy power along the heat flow]
  \label{thm:concavityheat}
\begin{equation}
  \frac{\mathrm{d}^2}{\mathrm{d}t^2} \exp{\frac{S(A|M)_{(\cN(t)\otimes \id_M)(\rho_{AM})}}{n} }\bigg|_{t=0} \leq 0\ .
\end{equation}
\end{theorem}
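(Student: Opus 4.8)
The plan is to differentiate the conditional entropy power twice along the heat flow and recognize the resulting inequality as precisely the conditional Fisher information isoperimetric inequality of Lemma \ref{lem:fisherisoperi}. Write $\rho_{AM}(t) := (\cN(t)\otimes\id_M)(\rho_{AM})$ and set
\begin{equation}
  N(t) := \exp{\frac{S(A|M)_{\rho_{AM}(t)}}{n}}\ .
\end{equation}
By the quantum conditional de Bruijn identity \cite[Eq.~63]{depalmaconditional} one has $\frac{\mathrm{d}}{\mathrm{d}t}S(A|M)_{\rho_{AM}(t)} = J(A|M)_{\rho_{AM}(t)}$, hence
\begin{equation}
  N'(t) = \frac{1}{n}\,J(A|M)_{\rho_{AM}(t)}\,N(t)\ .
\end{equation}

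First I would differentiate this identity once more, obtaining
\begin{equation}
  N''(0) = \frac{N(0)}{n}\left(\left.\frac{\mathrm{d}}{\mathrm{d}t}J(A|M)_{\rho_{AM}(t)}\right|_{t=0} + \frac{1}{n}\,J(A|M)_{\rho_{AM}}^{2}\right)\ .
\end{equation}
Then I would invoke Lemma \ref{lem:fisherisoperi}: carrying out the derivative of the reciprocal in $\left.\frac{\mathrm{d}}{\mathrm{d}t}\left[\frac{1}{n}J(A|M)_{\rho_{AM}(t)}\right]^{-1}\right|_{t=0} \geq 1$ gives exactly $\left.\frac{\mathrm{d}}{\mathrm{d}t}J(A|M)_{\rho_{AM}(t)}\right|_{t=0} \leq -\frac{1}{n}J(A|M)_{\rho_{AM}}^{2}$. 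Substituting this bound into the expression for $N''(0)$ makes the bracket nonpositive, so $N''(0)\leq 0$, which is the claim. (If $J(A|M)_{\rho_{AM}}=0$ then, by monotonicity of the Fisher information along the flow, $N$ is constant and there is nothing to prove, so one may assume $J(A|M)_{\rho_{AM}}>0$.)

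The step requiring care — and the main obstacle — is justifying the existence of the second derivative, i.e.\ the differentiability of $t\mapsto J(A|M)_{\rho_{AM}(t)}$ at $t=0$. I would handle this exactly as in the proof of Lemma \ref{lem:fisherisoperi} combined with the semigroup property $\rho_{AM}(s+t)=(\cN(t)\otimes\id_M)(\rho_{AM}(s))$: applying Lemma \ref{lem:fisherisoperi} with $\rho_{AM}(s)$ in place of $\rho_{AM}$ shows that $\phi(s):=n\,J(A|M)_{\rho_{AM}(s)}^{-1}$ has right derivative at least $1$ at every $s\geq0$, while $S(A|M)_{\rho_{AM}(t)}$ is continuously differentiable in $t$ with derivative $J(A|M)_{\rho_{AM}(t)}$ (de Bruijn, together with concavity of the integral conditional Fisher information $\Delta_{A|M}(\rho_{AM})(t)$, the quantum analogue of \autoref{thm:regularity} proven in \cite{depalmaconditional}), so $\phi$ is continuous. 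Since $N'(t)=N(t)/\phi(t)$, one then checks that $N'$ is non-increasing on $[0,\infty)$ — wherever $\phi$ is differentiable, $\frac{\mathrm{d}}{\mathrm{d}t}\log N'(t)=\frac{1-\phi'(t)}{\phi(t)}\leq0$ because $\phi'(t)\geq1$ — so that $N$ is concave on all of $[0,\infty)$; in particular $N''(0)\leq0$. With this regularity in place, the inequality itself is an immediate consequence of the isoperimetric Lemma \ref{lem:fisherisoperi}.
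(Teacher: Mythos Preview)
Your argument is essentially identical to the paper's: you compute $N''(0)$ via de Bruijn and then recognize that the sign of the bracket is exactly the reformulation of Lemma~\ref{lem:fisherisoperi}. The paper does precisely this (writing $P(t)$ for your $N(t)$), without your additional case distinction for $J(A|M)=0$ and without your discussion of the differentiability of $t\mapsto J(A|M)_{\rho_{AM}(t)}$; it simply takes the second derivative for granted. In that sense your write-up is, if anything, a bit more careful about regularity than the paper itself.
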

\begin{proof}
We write here $P(t) = \exp{\frac{S(A|M)_{(\cN(t)\otimes\id_M)(\rho_{AM})}}{n}}$ and apply the de Bruijn identity \cite[Eq. 63]{depalmaconditional} twice to obtain
\begin{equation}
  \frac{\mathrm{d}^2}{\mathrm{d}t^2}P(t) \bigg|_{t=0} = P(0)\left(\left[\frac{1}{n} J(A|M) \right]^2 + \frac{1}{n} \frac{\mathrm{d}}{\mathrm{d}t} J(A|M)_{(\cN(t)\otimes\id_M)(\rho_{AM})} \bigg|_{t=0}\right)\ .
\end{equation}
The quantum conditional Fisher information isoperimetric inequality stated in \eqref{eq:fisheriso} can be restated as
\begin{equation}
  \frac{1}{n^2} J(A|M)^2 + \frac{1}{n}\frac{\mathrm{d}}{\mathrm{d}t} J(A|M)_{(\cN(t)\otimes \id_M)(\rho_{AM})} \big|_{t=0}\leq 0\ .
\end{equation}
Since $P(0) \geq 0$, the concavity of the quantum conditional entropy power follows.
\end{proof}
\subsection{Converse bound on the entanglement-assisted classical capacity for a non-Gaussian classical noise channel}
\label{sec:ea}
The fact that conditional entropy power inequalities imply upper bounds on the entanglement-assisted classical capacity has been known since the first quantum conditional entropy power inequality has been proposed~\cite{koenigconditionalepi2015,depalmaconditional}.
In this section, we use the conditional entropy power inequality \eqref{eq:cqcondepi} to prove such an upper bound for a classical noise channel which is not necessarily Gaussian.

We consider the classical noise channel with a given (possibly non-Gaussian) noise probability density function $f: \mathbb{R}^{2n} \rightarrow \mathbb{R}$. This channel is given by $\cE_f: A \rightarrow C$,
\begin{equation}
\cE_{f}(\rho_A) = f \star \rho_A\ .
\end{equation}
The entanglement-assisted classical capacity~\cite{holevobook,Wilde_2017,wildeqi} is then
\begin{equation}
C_{\mathrm{ea}}(\cE_f) = \sup\left\{ I(C:M)_{(\cE_f\otimes\id_M)(\rho_{AM})}: \rho_{AM} \text{ pure}, \tr_A[H_A \rho_A]\leq nE \right\}\ .
\end{equation}
The energy constraint $\tr_A[H_A\rho_A] \leq nE$ amounts to the assumption that the sender can only use states of a finite average energy $E$ per mode. This assumption is required to make the entanglement-assisted capacity finite. Indeed, the assumption that a sender can use an unlimited amount of energy is unphysical.
Let
\begin{equation}
  E_0 := \frac{E(f)}{2n}, \qquad S_0 := \frac{S(f)}{n}
\end{equation}
be the average energy and entropy per mode of $f$. We can now bound the maximum output entropy as
\begin{align}
  \sup_{\substack{\tr[H_A\rho_A] \leq nE }} S\left(\cE_f(\rho_A)\right) &= \sup_{\substack{\tr[H_A\rho_A]\leq nE}} S\left(\cE_f(\rho_{A,0}^{(d(\rho_A))}) \right)\\
  &= \sup_{\substack{\tr[H_A\rho_A] \leq nE}} S\left(\cE_f(\rho_{A,0})^{(d(\rho_A))} \right) \\
  &= \sup_{\substack{\tr[H_A\rho_A] \leq nE}} S\left(\cE_f(\rho_{A,0})\right)\\
  &= \sup_{\substack{\tr[H_A\rho_{A}] \leq nE\\ d(\rho_{A}) = 0}} S\left(\cE_f(\rho_{A})\right)\ ,
\end{align}
where we have written $\rho_{A,0} = D(-d(\rho_A))\rho_A D(-d(\rho_A))^\dagger$ for the state $\rho_A$ which has been displaced by its first moments such that it is centered, i.e., the first moments of $\rho_{A,0}$ are zero. The first equality follows by this definition. In the second equality we have used compatibility of the convolution \eqref{eq:cqconv} with displacements, and in the third equality, we have used the fact that the von Neumann entropy is invariant under conjugation with unitaries. In the fourth equality, we have used the fact that
\begin{align}
  \tr[H_A \rho_{A,0}] &= \tr\left[\left(\frac{1}{2}\sum_{k=1}^{2n}R_k^2 - \frac{n}{2}\id_A \right)\rho_{A,0} \right]\\
  & = \tr\left[\left(\frac{1}{2} \sum_{k=1}^{2n} D(-d(\rho_A))^\dagger R_k^2 D(-d(\rho_A)) - \frac{n}{2}\id_A \right) \rho_A\right]\\
  & = \tr\left[\left(\frac{1}{2} \sum_{k=1}^{2n} \left(R_k - d_k(\rho_A)\right)^2 - \frac{n}{2}\id_A\right)\rho_A \right]\\
  & = \tr\left[\left(\frac{1}{2} \sum_{k=1}^{2n} R_k^2 - \frac{n}{2}\id_A\right)\rho_A\right] - \sum_{k=1}^{2n} d_k(\rho_A) \tr[R_k \rho_A]+ \frac{1}{2}\sum_{k=1}^{2n} d_k(\rho_A)^2\\
  & = \tr[H_A \rho_{A}] - \frac{1}{2} \left\| d(\rho_{A})\right\|^2 \leq \tr[H_A\rho_A] \leq nE\ .
\end{align}
Therefore in order to upper bound the output entropy, we can restrict our consideration to centered states, i.e., states which have zero first moments. The average energy per mode at the output $\cE_f(\rho_A)$ is then bounded as
\begin{align}
  \frac{1}{n}\tr_C[H_C \cE_f(\rho_A)] &= \tr\left[ \left(\frac{1}{2n}\sum_{k=1}^{2n} R_k^2 - \frac{\id}{2}\right)(f\star \rho_A) \right]\\
  &= \tr\left[\int_{\mathbb{R}^{2n}}f(\xi) \left(\frac{1}{2n}\sum_{k=1}^{2n} R_k^2 - \frac{\id}{2} \right) D(\xi) \rho_A D(\xi)^\dagger \frac{\mathrm{d}^{2n}\xi}{(2\pi)^n} \right]\\
  &= \tr\left[\int_{\mathbb{R}^{2n}}f(\xi) \left(\frac{1}{2n}\sum_{k=1}^{2n} (R_k + \xi_k)^2 - \frac{\id}{2}\right) \rho_A\frac{\mathrm{d}^{2n}\xi}{(2\pi)^n} \right]\\
  &= \tr\left[\left(\frac{1}{2n}\sum_{k=1}^{2n} R_k^2 - \frac{\id}{2} \right)\rho_A \right] + \frac{1}{2n} \sum_{k=1}^{2n} \int_{\mathbb{R}^{2n}} f(\xi) \xi_k^2 \frac{\mathrm{d}^{2n}\xi}{(2\pi)^n}\\
  &\qquad + \frac{1}{n}\sum_{k=1}^{2n}\int_{\mathbb{R}^{2n}} f(\xi) \xi_k \tr[R_k\rho_A] \frac{\mathrm{d}^{2n}\xi}{(2\pi)^n}\\
  &= \frac{1}{n} \tr[H_A \rho_A] + E_0 \leq E + E_0\ .
\end{align}
In the last line we have used that $\tr[R_k\rho_A] = 0$ by assumption.
Hence by the fact that thermal states maximize the von Neumann entropy among all states with a given average energy, we have that the maximum output entropy
is bounded by
\begin{equation}
S(\cE_f(\rho_A)) \leq n g\left(E + E_0\right)\ .
\end{equation}
From the conditional entropy power inequality \eqref{eq:cqcondepi}, we obtain
\begin{align}
  \exp{\frac{S(C|M)}{n}} &\geq \exp{\frac{S(A|M)}{n}} + \exp{S_0}\\
  &= \exp{\frac{-S(A)}{n}} + \exp{S_0} \\
  &\geq \exp{\left(-g(E)\right)} + \exp{S_0}\ .
\end{align}
This implies for the mutual information
\begin{align}
I(C:M) &= S(\cE_f(\rho_A)) - S(C|M)_{(\cE_f\otimes\id_M)(\rho_{AM})}\\
&\leq n g\left(E + E_0 \right) - n \log\left( e^{-g(E)} + e^{S_0} \right) \ .
\end{align}
Therefore, for the entanglement-assisted classical capacity, we have the upper bound
\begin{equation}
C_{\mathrm{ea}}(\cE_f) \leq n g\left(E + E_0 \right) - n \log\left( e^{-g(E)} + e^{S_0} \right) \ .
\end{equation}
\section{A simple proof of convergence rate of the quantum Ornstein-Uhlenbeck semigroup}
\label{sec:qOU}
We consider the quantum Ornstein-Uhlenbeck semigroup which is a one-parameter semigroup of CPTP maps $\{\cP^{(\mu,\lambda)}(t) = e^{t\cL_{\mu,\nu}} \}_{t\geq 0}$ on the one-mode Gaussian quantum system $A$ generated by the Liouvillian
\begin{equation}
\cL_{\mu,\lambda} = \mu^2 \cL_- + \lambda^2 \cL_+ \qquad \text{ for } \mu > \lambda > 0\ ,
\end{equation}
where
\begin{equation}
\cL_+(\rho) = a^\dagger \rho a - \frac{1}{2} \{ aa^\dagger, \rho \} \qquad \text{ and } \qquad \cL_-(\rho) = a\rho a^\dagger - \frac{1}{2} \{ a^\dagger a, \rho\}\ ,
\end{equation}
and $a$ is the ladder operator of $A$.

The map $\cP^{(\mu,\lambda)}(t)$ is equivalent to a beam splitter with transmissivity $\eta = e^{-(\mu^2-\lambda^2)t}$ and environment state $\omega^{(\mu,\lambda)} := \frac{\mu^2-\lambda^2}{\mu^2} \sum_{k=0}^\infty \left(\frac{\lambda^2}{\mu^2}\right)^k\proj{k}$. This is a Gaussian thermal state with the covariance matrix equal to $\Gamma^{(\mu,\lambda)} = \frac{1}{2} \frac{\lambda^2+\mu^2}{\mu^2-\lambda^2}\id_2.$
The state $\omega^{(\mu,\lambda)}$ is also the unique fixed point of the quantum Ornstein-Uhlenbeck semigroup with parameters $\mu$ and $\lambda$. It is known that the qOU semigroup converges in relative entropy to the fixed point at an exponential rate given by the exponent  $\mu^2-\lambda^2$:
\begin{equation}
D\left(\cP^{(\mu,\lambda)}(t)(\rho) \big\| \omega^{(\mu,\lambda)} \right) \leq e^{-(\mu^2-\lambda^2)t}D\left(\rho \big\| \omega^{(\mu,\lambda)}\right) \qquad \text{ for all } t \geq 0\ .
\end{equation}
This is a conjecture stated in \cite{HubKoeVersh16}, which was proven in \cite{Carlen_2017} using methods of gradient flow.

Here we want to study a slightly different, more general scenario: We consider a bipartite quantum system $AM$, where the system $A$ undergoes a qOU evolution.
We are going to show a similar convergence statement in this situation, namely, that the system converges in relative entropy to the product state $\omega_A^{(\mu,\lambda)}~\otimes~\tr_A(\rho_{AM})$ at an exponential rate.

\begin{theorem}
  \label{thm:qOUbipartite}
We have for any quantum state $\rho_{AM}$
\begin{equation}
  D\left((\cP^{(\mu,\lambda)}(t)\otimes \id_M)(\rho_{AM}) \big\| \omega_A^{(\mu,\lambda)}\otimes\rho_M \right) \leq e^{-(\mu^2-\lambda^2)t} D\left( \rho_{AM} \big\| \omega_A^{(\mu,\lambda)} \otimes \rho_M \right)\ ,
\end{equation}
where $\rho_M = \tr_A (\rho_{AM})$ is the marginal state of $\rho_{AM}$ on the system $M$. In particular, Eq. \eqref{eq:qOUrate} holds.
\end{theorem}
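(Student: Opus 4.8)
The plan is to use the representation of the quantum Ornstein--Uhlenbeck map recalled above: $\cP^{(\mu,\lambda)}(t)$ is the beam splitter of transmissivity $\eta(t):=e^{-(\mu^2-\lambda^2)t}$ with environment state $\omega^{(\mu,\lambda)}=:\omega_A$. Write $k:=\mu^2-\lambda^2$, $\rho_{AM}(t):=(\cP^{(\mu,\lambda)}(t)\otimes\id_M)(\rho_{AM})$, and
\[
F(t):=D\!\left(\rho_{AM}(t)\,\big\|\,\omega_A\otimes\rho_M\right),
\]
noting that $\rho_M$ is left invariant by the evolution. If $F(0)=+\infty$ there is nothing to prove, so I would assume $F(0)<\infty$; from the identity $F(0)=I(A\!:\!M)_{\rho_{AM}}+D(\rho_A\|\omega_A)$ and the standard fact that finite relative entropy to a Gibbs state forces finite energy, it follows that $\tr[H_A\rho_A]<\infty$, hence all quantities below are finite (we also assume $S(\rho_M)<\infty$, as usual). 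Since $\omega_A$ is thermal, $-\log\omega_A=\beta H_A+(\log Z_\beta)\id_A$ for its inverse temperature $\beta$, with $S(\omega_A)=\beta\tr[H_A\omega_A]+\log Z_\beta$; using the invariance of $\rho_M$ this gives the decomposition
\[
F(t)=\bigl(S(\omega_A)-S(A|M)_{\rho_{AM}(t)}\bigr)+\beta\bigl(\tr[H_A\rho_A(t)]-\tr[H_A\omega_A]\bigr).
\]

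Next I would bound the two pieces separately, reducing everything to a single beam-splitter step from $t$ to $t+s$ via the semigroup property. For the energy piece, a direct computation with the generator $\cL_{\mu,\lambda}$ gives $\tfrac{d}{dt}\tr[H_A\rho_A(t)]=-k\bigl(\tr[H_A\rho_A(t)]-\tr[H_A\omega_A]\bigr)$, so
\[
\tr[H_A\rho_A(t+s)]-\tr[H_A\omega_A]=\eta(s)\bigl(\tr[H_A\rho_A(t)]-\tr[H_A\omega_A]\bigr);
\]
the energy piece contracts by the exact factor $\eta(s)$. For the entropy piece, I would apply the conditional entropy power inequality \eqref{eq:condepi} for the beam splitter (established in \cite{depalmaconditional}) to the step from $t$ to $t+s$: the inputs are $\rho_{AM}(t)$ on $A$ with memory $M$ and the environment $\omega_A$ on a system $E$, which is uncorrelated with $M$ so that $I(A\!:\!E|M)=0$, the transmissivity is $\eta(s)$, and $S(E|M)=S(\omega_A)$; with $n=1$ this yields
\[
e^{S(A|M)_{\rho_{AM}(t+s)}}\ge\eta(s)\,e^{S(A|M)_{\rho_{AM}(t)}}+(1-\eta(s))\,e^{S(\omega_A)},
\]
which rearranges to $e^{q}-1\ge\eta(s)\,(e^{p}-1)$ with $p:=S(A|M)_{\rho_{AM}(t)}-S(\omega_A)$ and $q:=S(A|M)_{\rho_{AM}(t+s)}-S(\omega_A)$.

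The elementary lemma that closes the argument is: if $w\in(0,1]$ and $e^{q}-1\ge w\,(e^{p}-1)$, then $q\ge w\,p$. Indeed, $\phi(x):=e^x-1$ is strictly increasing and convex with $\phi(0)=0$, so $\phi(wp)=\phi\bigl(wp+(1-w)\cdot0\bigr)\le w\,\phi(p)$; if $q<wp$ then $\phi(q)<\phi(wp)\le w\,\phi(p)$, contradicting the hypothesis. With $w=\eta(s)\le1$ this gives $q\ge\eta(s)\,p$, i.e.\ $S(\omega_A)-S(A|M)_{\rho_{AM}(t+s)}\le\eta(s)\bigl(S(\omega_A)-S(A|M)_{\rho_{AM}(t)}\bigr)$, so the conditional entropy deficit also contracts by at least $\eta(s)$. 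Combining the two contractions in the displayed formula for $F$ gives $F(t+s)\le\eta(s)\,F(t)$, and $t=0$ yields $F(t)\le e^{-(\mu^2-\lambda^2)t}F(0)$; taking $M$ trivial recovers \eqref{eq:qOUrate}. The main difficulty here is bookkeeping rather than conceptual: one must justify the decomposition of $F$ (finiteness of $S(A|M)$, $\tr[H_A\rho_A]$ and $S(\rho_M)$) and verify the hypotheses of the conditional entropy power inequality along the whole trajectory, both of which follow from $F(0)<\infty$ together with the finiteness of the energy and entropy of the thermal environment $\omega_A$.
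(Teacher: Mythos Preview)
Your argument is correct and follows the same overall strategy as the paper: decompose $D(\rho_{AM}(t)\|\omega_A\otimes\rho_M)$ into an energy term (which contracts exactly by the factor $\eta$ under the beam splitter, since $-\log\omega_A$ is affine in $H_A$) and a conditional-entropy deficit $S(\omega_A)-S(A|M)$ (which contracts via the conditional EPI for the beam splitter). The only difference is in which form of the EPI you invoke. The paper applies the \emph{linear} conditional EPI from \cite{depalmaconditional} directly, obtaining $S(A|M)_{\rho_{AM}(t)}\ge\eta\,S(A|M)_{\rho_{AM}}+(1-\eta)\,S(\omega_A)$ in one stroke; this is already the contraction of the entropy deficit by $\eta$, and the whole proof collapses to four lines. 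You instead invoke the exponential form \eqref{eq:condepi} and then prove an elementary convexity lemma to extract the linear inequality from it --- but that lemma is nothing other than the standard fact that the linear EPI follows from the exponential one by concavity of the logarithm, so the detour is harmless but avoidable. Likewise, the intermediate step $F(t+s)\le\eta(s)F(t)$ via the semigroup property is unnecessary: since $\cP^{(\mu,\lambda)}(t)$ is itself a single beam splitter of transmissivity $\eta(t)$, one may compare $F(t)$ to $F(0)$ directly.
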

\begin{proof}
Write $\rho_{AM}(t) = (\cP_{\mu,\lambda}(t) \otimes\id_M)(\rho_{AM})$, we then have
\begin{align}
D\left( \rho_{AM}(t) \big\| \omega_A^{(\mu,\lambda)} \otimes \rho_M\right) &= -S(A|M)_{\rho_{AM}(t)} - \tr\left( \rho_A(t) \log \omega_A^{(\mu,\lambda)} \right)\\
&\leq  -\eta S(A|M)_{\rho_{AM}} - \left(1-\eta\right)S(\omega^{(\mu,\lambda)}_A)\\
&  -\eta\tr\left(\rho_A\log\omega_A^{(\mu,\lambda)} \right) - (1-\eta)\tr\left(\omega_A^{(\mu,\lambda)}\log\omega_A^{(\mu,\lambda)} \right)\\
&= e^{-(\mu^2-\lambda^2)t}D\left(\rho_{AM}\big\| \omega_A^{(\mu,\lambda)}\otimes \rho_M\right)\ .
\end{align}
\end{proof}
This implies exponential convergence to the fixed point both on bipartite systems as well as the result \eqref{eq:qOUrate}.

\section{Conclusion}
\label{sec:conclusion}
We have established a conditional entropy power inequality for classical noise channels in bosonic quantum systems, modeled by the convolution \eqref{eq:cqconv}. This inequality implies the unconditioned entropy power inequality for this convolution and lifts regularity problems in previous proofs in this area.
In the conditioned case, this inequality is optimal, while the optimal inequality in the unconditioned case remains unsolved.
This situation is analogous to the situation for the beam splitter \cite{HubKoe17}, where the optimal unconditioned inequality is conjectured to be the entropy photon-number inequality~\cite{Guhaetal07}, which states that couples of thermal Gaussian input states minimize the output entropy of the beam splitter among all the couples of independent input states, each with a given entropy.
The entropy photon-number inequality has been recently proven for the one-mode beam splitter in the particular case where one of the two inputs is a thermal Gaussian state \cite{De_Palma_2016passive,De_Palma_2017attenuator,de2017gaussian,qi2016thermal,De_Palma_2016_maximizers,De_Palma_2017} and in some very special cases for the multi-mode beam splitter \cite{de2017multimode,GiovannettietalGaussianoutput}, and it otherwise remains an open challenging conjecture (see~\cite{doi:10.1063/1.5038665} for a review).
Similarly, an analogous optimal inequality has been conjectured for the quantum additive noise channel~\cite{HubKoe17}. While the validity of this inequality remains an open problem (besides the special case covered in~\cite{De_Palma_2017}), the conditional entropy power inequality proven in this paper is optimal and settles the problem in the presence of quantum memory.

We have used our new conditional entropy power inequality to provide upper bounds on the entanglement-assisted classical capacity of quantum non-Gaussian additive noise channels, and to prove conditional quantum versions of various celebrated results from geometric analysis. Moreover, we have shown how conditional entropy power inequalities can be used to study the convergence rate
of quantum dynamical semigroups, giving a simple and short proof of the exponential convergence of the quantum Ornstein-Uhlenbeck semigroup in relative entropy.

\section*{Acknowledgments}
GdP thanks Dario Trevisan for useful comments.

GdP acknowledges financial support from the European Research Council (ERC Grant Agreements Nos. 337603 and 321029), the Danish Council for Independent Research (Sapere Aude), VILLUM FONDEN via the QMATH Centre of Excellence (Grant No. 10059), and the Marie Sk\l odowska-Curie Action GENIUS (Grant No. 792557).
SH is supported by the Technische Universität München - Institute for Advanced Study, funded by the German Excellence Initiative and the European Union Seventh Framework Programme under grant agreement no. 291763. SH acknowledges additional support by DFG project no. K05430/1-1.

\includegraphics[width=0.05\textwidth]{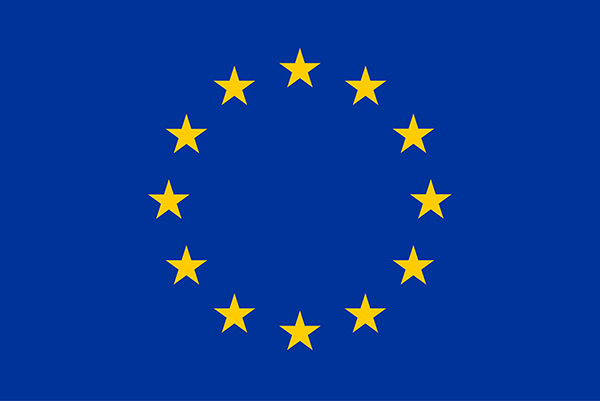}
This project has received funding from the European Union's Horizon 2020 research and innovation programme under the Marie Sk\l odowska-Curie grant agreement No. 792557.


\end{document}